\theoremstyle{plain}
\newtheorem{theorem}{Theorem}[section]
\newtheorem{lemma}[theorem]{Lemma}
\newtheorem{conjecture}[theorem]{Conjecture}
\theoremstyle{definition}
\theoremstyle{remark}
\newtheorem{remark}[theorem]{Remark}
\newcommand{\cS}{\mathcal{S}}
\newcommand{\cB}{\mathcal{B}}
\newcommand{\cA}{\mathcal{A}}
\newcommand{\bN}{\mathbb{N}}
\newcommand{\bcS}{\boldsymbol{\mathcal{S}}}
\newcommand{\bcA}{\boldsymbol{\mathcal{A}}}
\newcommand{\bcP}{\boldsymbol{\mathcal{P}}}
\newcommand{\ty}{\tilde{y}}
\newcommand{\tx}{\tilde{x}}
\newcommand{\bb}{\mathds{1}}
\newcommand{\timesf}{\mathsf{time}}
\newcommand{\depth}{\mathsf{depth}}
\newcommand{\Par}{\mathsf{Par}}
\newcommand{\Free}{\mathsf{Free}}
\newcommand{\Freeemp}{\mathsf{Free}^{\mathsf{emp}}}
\newcommand{\OPT}{\mathsf{OPT}}
\newcommand{\current}{\mathsf{current}}
\newcommand{\med}{\mu_{\varepsilon,\delta}}
\newcommand{\started}{\mathsf{Started}}
\newcommand{\played}{\mathsf{Played}}
\newcommand{\subr}{\mathsf{Policy}}
\newcommand{\Halmos}{}
\newcommand{\up}{}
\newcommand{\down}{}
\title{Improvements and Generalizations of Stochastic Knapsack and Markovian Bandits Approximation Algorithms}
\author{
  Will Ma\footnote{Operations Research Center, Massachusetts Institute of Technology, \texttt{willma@mit.edu}.}
}
\begin{document}

\maketitle

\begin{abstract}
We study the multi-armed bandit problem with arms which are Markov chains with rewards.  In the finite-horizon setting, the celebrated Gittins indices do not apply, and the exact solution is intractable.  We provide approximation algorithms for a more general model which includes Markov decision processes and non-unit transition times.  When preemption is allowed, we provide a $(\frac{1}{2}-\varepsilon)$-approximation, along with an example showing this is tight.  When preemption isn't allowed, we provide a $\frac{1}{12}$-approximation, which improves to a $\frac{4}{27}$-approximation when transition times are unity.  Our model encompasses the Markovian Bandits model of Gupta et al, the Stochastic Knapsack model of Dean, Goemans, and Vondrak, and the Budgeted Learning model of Guha and Munagala, and our algorithms improve existing results in all three areas.  In our analysis, we encounter and overcome to our knowledge a novel obstacle---an algorithm that provably exists via polyhedral arguments, but cannot be found in polynomial time.
\end{abstract}

\clearpage

\section{Introduction.} We are interested in a broad class of stochastic control problems: there are multiple evolving systems competing for the attention of a single operator, who has limited time to extract as much reward as possible.  Classical examples include a medical researcher allocating his time between different clinical trials, or a graduate student shifting her efforts between different ongoing projects.  Before we describe our model in detail, we introduce the three problems in the literature which are special cases of our problem, and motivated our avenues of generalization.

\subsection{Markovian Bandits.} The Markovian multi-armed bandit problem is the following: there are some number of Markov chains (\emph{arms}), each of which only evolve to the next node\footnote{We use the word \emph{node} instead of \emph{state} to avoid confusion with the notion of a state in dynamic programming.} and return some reward when you play (\emph{pull}) that arm; the controller has to allocate a fixed number of pulls among the arms to maximize expected reward.  The reward returned by the next pull of an arm depends on the current node that arm is on.  When an arm is pulled, the controller observes the transition taken before having to choose the next arm to pull.  Multi-armed bandit (MAB) problems capture the tradeoff between \emph{exploring} arms that could potentially transition to high-reward nodes, versus \emph{exploiting} arms that have the greatest immediate payoff.

The infinite-horizon version of this problem with discounted rewards can be solved by the celebrated index policy of Gittins; see the book \cite{GGW11} for an in-depth treatment of Gittins indices.  However, all of this theory is crucially dependent on the time horizon being infinite (see \cite[sect.~3.4.1]{GGW11}).  The Gittins index measures the asymptotic performance of an arm, and does not apply when there is a discrete number of time steps remaining.

Also, when we refer to multi-armed bandit in this paper, it is not to be confused with the popular Stochastic Bandits model, where each arm is an unknown reward distribution, playing that arm collects a random sample from its distribution, and the objective is to learn which arm has the highest mean in a way that minimizes regret.  For a comprehensive summary on Stochastic Bandits and related bandit models, we refer the reader to the survey of Bubeck and Cesa-Bianchi \cite{BCB12}.  The main difference with our problem is that despite its stochastic nature, all of the transition probabilities are given as input and we can define an exact optimization problem (and the challenge is computational), whereas in Stochastic Bandits there is uncertainty in the parameter information (and the challenge is to compete with an omniscient adversary).

The finite-horizon Markovian Bandits problem is intractable even in special cases (see Goel et al.\ \cite{GGM06}, and the introduction of Guha and Munagala \cite{GM13}), so we turn to approximation algorithms.  The state of the art is an LP-relative $\frac{1}{48}$-approximation\footnote{All of the problems we discuss will be maximization problems, for which an $\alpha$-approximation refers to an algorithm that attains at least $\alpha$ of the optimum.} by Gupta et al.\ \cite{GKMR11}.  Our results improve this bound by providing an LP-relative $\frac{4}{27}$-approximation for a more general problem.

\subsubsection{Martingale Reward Bandits and Bayesian Bandits.} While Markovian Bandits is a different problem from Stochastic Bandits, it is a generalization of the closely related Bayesian Bandits, where each arm is an unknown reward distribution, but we have prior beliefs about what these distributions may be, and we update our beliefs as we collect samples from the arms.  The objective is to maximize expected reward under a fixed budget of plays.

For each arm, every potential posterior distribution can be represented by a node in a Markov chain, and the transitions between nodes correspond to the laws of Bayesian inference.  However, the resulting Markov chain is forced to satisfy the \emph{martingale condition}, ie.\ the expected reward at the next node must equal the expected reward at the current node, by Bayes' law.  This condition is not satisfied by Stochastic Knapsack with correlated rewards, as well as certain natural applications of the bandit model.  For instance, in the marketing problems studied by Bertsimas and Mersereau \cite{BM07}, the arms represent customers who may require repeated pulls (marketing actions) before they transition to a reward-generating node.

Nonetheless, fruitful research has been done in the Bayesian Bandits setting --- Guha and Munagala \cite{GM13} show that constant-factor approximations can be obtained even under a variety of side constraints.  The complexity necessary for a policy to be within a constant factor of optimal is much lower under the martingale assumption.  For the basic bandit problem with no side constraints, Farias and Madan \cite{FM11} observe that \emph{irrevocable} policies --- policies which cannot start an arm, stop pulling it at some point, and resume it later --- extract a constant fraction of the optimal (non-irrevocable) reward.  Motivated by this, \cite{GM13} obtains a $(\frac{1}{2}-\varepsilon)$-approximation for Bayesian Bandits that is in fact a irrevocable policy.

\subsubsection{Irrevocable Bandits.} The above can be contrasted with the work of Gupta et al., who construct a non-martingale instance where irrevocable policies (they refer to these policies as \emph{non-preempting}) can only extract an arbitrarily small fraction of the optimal reward \cite[appx.~A.3]{GKMR11}.  Therefore, without the martingale assumption, we can only hope to compare irrevocable policies against the irrevocable optimum.  We provide a $(\frac{1}{2}-\varepsilon)$-approximation for this problem, which we refer to as \emph{Irrevocable Bandits}.

\subsection{Stochastic Knapsack.} The Stochastic Knapsack (SK) problem was introduced by Dean et al.\ in 2004 \cite{DGV04} (see \cite{DGV08} for the journal version).  We are to schedule some jobs under a fixed time budget.  Each job has a stochastic reward and processing time whose distribution is known beforehand.  We sequentially choose which job to perform next, only discovering its length and reward in real-time as it is being processed.  The objective is to maximize the expected reward before the time budget is spent.  A major focus of their work is on the benefit of \emph{adaptive} policies (which can make dynamic choices based on the instantiated lengths of jobs processed so far) over \emph{non-adaptive} policies (which must fix an ordering of the jobs beforehand), but in our work all policies will be adaptive.

Throughout \cite{DGV08}, the authors assume uncorrelated rewards --- that is, the reward of a job is independent of its length.  The state of the art for this setting is a $(\frac{1}{2}-\varepsilon)$-approximation by Bhalgat \cite{Bha11}; a $(\frac{1}{2}-\varepsilon)$-approximation is also obtained for the variant where jobs can be canceled at any time by Li and Yuan \cite{LY13}.  \cite{GKMR11} provides a $\frac{1}{8}$-approximation for Stochastic Knapsack with potentially correlated rewards, and a $\frac{1}{16}$-approximation for the variant with cancellation.  We improve these bounds by providing an LP-relative $(\frac{1}{2}-\varepsilon)$-approximation for a problem which generalizes both variants with correlated rewards.  Furthermore, we construct an example where the true optimum is as small as $\frac{1}{2}+\varepsilon$ of the optimum of the LP relaxation.  Therefore, our bound is tight in the sense that one cannot hope to improve the approximation ratio using the same LP relaxation.

However, it is important to mention that our results, as well as the results of \cite{GKMR11}, require the job sizes and budget to be given in unary, since these algorithms use a time-indexed LP.  It appears that this LP is necessary whenever correlation is allowed --- the non-time-indexed LP can be off by an arbitrarily large factor (see \cite[appx.~A.2]{GKMR11}).  Techniques for discretizing the time-indexed LP if the job sizes and budget are given in binary are provided in \cite{GKMR11}, albeit losing some approximation factor.  Nonetheless, in this paper, we always think of processing times as discrete hops on a Markov chain, given in unary.  Note that stronger hardness results than those aforementioned can be obtained when the sizes are given in binary (see Dean et al.\ \cite{DGV05}, and the introduction of \cite{DGV08}).

\subsection{Futuristic Bandits and Budgeted Bandits.} Starting with \cite{GM07a,GM07b}, Guha and Munagala have studied many variants of \emph{budgeted learning} problems --- including switching costs, concave utilities, and Lagrangian budget constraints.  See \cite{GM08} for an updated article that also subsumes some of their other works.  Their basic setting, which we refer to as \emph{Futuristic Bandits}, is identical to Bayesian Bandits (ie.\ there are Markov chains satisfying the martingale condition), except no rewards are dispensed during the execution of the algorithm.  Instead, once the budget\footnote{In some variants, there is a cost budget instead of a time budget, and exploring each arm incurs a different cost.  We explain in Section~\ref{preliminaries} why our model also generalizes this setting, which they refer to as \emph{Budgeted Bandits}.} is spent, we pick the arm we believe to be best, and only earn the (expected) reward for that arm.  A $\frac{1}{4}$-approximation is provided in \cite{GM08}, and this is improved by the same authors to a $(\frac{1}{3}-\varepsilon)$-approximation in \cite{GM13}.  Our algorithm works without the martingale assumption, but the approximation guarantee is only $\frac{4}{27}$.

\subsection{MAB superprocess with multi-period actions.}\label{MABSMA}

Motivated by these examples, we now introduce our generalized problem, which we call \emph{MAB superprocess with multi-period actions}.  Consider the Markovian Bandits setting, except we allow for a more general family of inputs, in two ways.

First, we allow transitions on the Markov chains to consume more than one pull worth of budget.  We can think of these transitions as having a non-unit \emph{processing time}.  The processing times can be stochastic, and correlated with the node transition that takes place.  The rewards can be accrued upon pulling the node, or only accrued if the processing time completes before the time budget runs out.  The applications of such a generalization to US Air Force jet maintenance have recently been considered in Kessler's thesis \cite{Kes13}, where it is referred to as \emph{multi-period actions}.

The second generalization is that we allow each arm to be a Markov decision process; such a problem is referred to as \emph{MAB superprocess} in Gittins et al.\ \cite{GGW11}.  Now, when the controller pulls an arm, they have a choice of actions, each of which results in a different joint distribution on reward, processing time, and transition taken.

The purpose of the first generalization is to allow MAB to model the jobs from Stochastic Knapsack which have rewards correlated with processing time and can't be canceled.  The purpose of the second generalization is to allow MAB to model Futuristic Bandits, where exploiting an arm corresponds to a separate action.  The details of our reductions, along with examples, will be presented throughout Section~\ref{preliminaries}, once we have introduced formal notation.

We consider two variants under our general setting: the case with preemption (ie.\ we can start playing an arm, not play it for some time steps, and resume playing it later), and the case without preemption.  The variant without preemption is necessary to generalize Stochastic Knapsack and Irrevocable Bandits.  The variant with preemption generalizes Markovian Bandits and Futuristic Bandits.

\subsection{Outline of results.}

Our main results can be outlined as follows:
\begin{itemize}
	\item Reductions from earlier problems to \emph{MAB superprocess with multi-period actions} [sect.~\ref{preliminaries}]
	\item Polynomial-sized LP relaxations for both variants of \emph{MAB superprocess with multi-period actions}, and polyhedral proofs that they are indeed relaxations [sect.~\ref{lp_relaxation_section}]
	\item A $(\frac{1}{2}-\varepsilon)$-approximation for \emph{MAB superprocess with multi-period actions---no preemption}, with runtime polynomial in the input and $\frac{1}{\varepsilon}$ [sect.~\ref{proof_section1}]
	\item A matching upper bound where it is impossible to obtain more than $\frac{1}{2}+\varepsilon$ of the optimum of the LP relaxation [sect.~\ref{upper_bound}]
	\item A $\frac{4}{27}$-approximation for \emph{MAB superprocess} (with preemption) [sect.~\ref{proof_section2}]
	\item A $\frac{1}{12}$-approximation for \emph{MAB superprocess with multi-period actions} (and preemption) [sect.~\ref{proof_section2.5}]
\end{itemize}

The way in which these approximation ratios improve previous results on SK and MAB is summarized in Tables~\ref{table1} and \ref{table2}\footnote{Some of these results have appeared in a preliminary conference version of this article \cite{Ma14}.}.

\begin{table}
	\caption{Comparison of results for SK.\label{table1}}
	\center{
		\begin{tabular}{|c||c|c|}
			\hline
			\up & Previous & Result as a Special \\
			\down Problem & Result & Case of Our Problems \\
			\hline
			\up\down Binary SK & $\frac{1}{2}-\varepsilon$ \cite{Bha11} & - \\
			\up\down Binary SK w/ Cancellation & $\frac{1}{2}-\varepsilon$ \cite{LY13} & - \\
			\up\down Unary Correlated SK & $\frac{1}{8}$ \cite{GKMR11} & $\frac{1}{2}-\varepsilon$ [thm.~\ref{mainresult1}], \cite{Ma14} \\
			\up\down Unary Correlated SK w/ Cancellation & $\frac{1}{16}$ \cite{GKMR11} & $\frac{1}{2}-\varepsilon$ [thm.~\ref{mainresult1}] \\
			\hline
		\end{tabular}
	}
\end{table}

\begin{table}
	\caption{Comparison of results for MAB.\label{table2}}
	\center{
		\begin{tabular}{|c||c|c|c|}
			\hline
			\up & Previous & Result as a Special & Result with \\
			\down Problem & Result & Case of Our Problems & Martingale Assumption \\
			\hline
			\up\down Markovian Bandits & $\frac{1}{48}$ & $\frac{4}{27}$ [thm.~\ref{mainresult2}], \cite{Ma14} & $\frac{1}{2}-\varepsilon$ \cite{GM13} \\
			\up\down Irrevocable Bandits & - & $\frac{1}{2}-\varepsilon$ [thm.~\ref{mainresult1}] & $\frac{1}{2}-\varepsilon$ \cite{GM13} \\
			\up\down Futuristic Bandits & - & $\frac{4}{27}$ [thm.~\ref{mainresult2}] & $\frac{1}{3}-\varepsilon$ \cite{GM13} \\
			\up\down Budgeted Bandits & - & $\frac{1}{12}$ [thm.~\ref{mainresult3}] & $\frac{1}{4}-\varepsilon$ \cite{GM08} \\
			\hline
		\end{tabular}
	}
\end{table}

\subsection{Sketch of techniques.}

In the variant without preemption, we show that given any feasible solution to the LP relaxation, there exists a policy which plays every node with half the probability it is played in the LP solution.  This would yield a $\frac{1}{2}$-approximation, but the policy cannot be specified in polynomial time, because the previous argument is purely existential.  Instead, we show how to approximate the policy via sampling, in a way that doesn't cause error propagation.

In the variant with preemption, we provide a priority-based approximation algorithm which intuitively achieves the same goals as the algorithm of Gupta et al.\ \cite{GKMR11}.  However, our algorithm, which has eliminated the need for their \emph{convex decomposition} and \emph{gap filling} operations, allows for a tighter analysis.  Furthermore, it grants the observation that we can get the same result for the more general model of Markov decision processes and non-unit transition times.  Finally, our analysis makes use of Samuels' conjecture \cite{Sam66} for $n=3$ (which is proven), and this helps us bound the upper tail.

\subsection{Related work.}

Most of the related work on bandits, stochastic knapsack/packing, and budgeted learning have already been introduced in the earlier subsections, so we only mention the neglected results here.  One such result for stochastic knapsack is the bi-criteria $(1-\varepsilon)$-approximation of Bhalgat et al.\ \cite{BGK11} that uses $1+\varepsilon$ as much time; such a result is also obtained via alternate methods by Li and Yuan \cite{LY13} and generalized to the setting with both correlated rewards and cancellation.  A new model that adds geometry to SK by associating jobs with locations in a metric space is the \emph{stochastic orienteering} problem introduced by Gupta et al.\ \cite{GKNR14}.  The benefit of adaptive policies for this problem is also addressed by Bansal and Nagarajan \cite{BN14}.

Another example of a stochastic optimization problem where adaptive policies are necessary is the \emph{stochastic matching} problem of Bansal et al.\ \cite{BGL+12} --- in fact we use one of their lemmas in our analysis.  Recently, the setting of stochastic matching has been integrated into online matching problems by Mehta and Panigrahi \cite{MP12}.

All of the problems described thus far deal only with expected reward.  Recently, Ilhan et al.\ \cite{IID11} studied the variant of SK where the objective is to maximize the probability of achieving a target reward; older work on this model includes Carraway et al. \cite{CSW93}.  Approximation algorithms for minimizing the expected sum of weighted completion times when the processing times are stochastic are provided in M{\"o}hring et al.\ \cite{MSU99}, and Skutella and Uetz \cite{SU01}.  SK with chance constraints --- maximizing the expected reward subject to the probability of running overtime being at most $p$ --- is studied in Goel and Indyk \cite{GI99}, and Kleinberg et al.\ \cite{KRT00}.

Looking at more comprehensive synopses, we point the reader interested in infinite-horizon Markovian Bandits to the book by Gittins et al.\ \cite{GGW11}.  Families of bandit problems other than Markovian, including Stochastic and Adversarial, are surveyed by Bubeck and Cesa-Bianchi \cite{BCB12}.  For an encyclopedic treatment of using dynamic programming to solve stochastic control problems, we refer the reader to the book by Bertsekas \cite{Ber95}; for stochastic scheduling in particular, we refer the reader to the book by Pinedo \cite{Pin12}.

\section{Fully generalized model.}\label{preliminaries}

We set out to define notation for the \emph{MAB superprocess with multi-period actions} problem described in Subsection~\ref{MABSMA}.  Let $n\in\bN$ denote the number of arms, which are Markov decision processes with rewards.  There is a budget of $B\in\bN$ time steps over which we would like to extract as much reward in expectation as possible.  The arms could also have multi-period actions, which are transitions that take more than one time step to complete.

Formally, for each arm $i$, let $\cS_i$ denote its finite set of nodes, with the root node being $\rho_i$.  To \emph{play} an arm $i$ that is currently on node $u\in\cS_i$, we select an action $a$ from the finite, non-empty action set $A$, after which the arm will transition to a new node $v\in\cS_i$ in $t$ time steps, accruing reward over this duration.  We will also refer to this process as \emph{playing action} $a$ \emph{on node} $u$, since for each pair $(u,a)$, we are given as input the joint distribution of the destination node, transition time, and reward.  Specifically, for all $a\in A$, $u,v\in\cS_i$, and $t\in[B]$\footnote{For any positive integer $m$, $[m]$ refers to the set $\{1,\ldots,m\}$.}, let $p^a_{u,v,t}$ denote the probability of transitioning to node $v$ in exactly $t$ time steps, when action $a$ is played on node $u$.  We will refer to this transition by the quadruple $(u,a,v,t)$, and when it occurs, let $r^a_{u,v,t,t'}\in[0,\infty)$ denote the reward accrued $t'$ time steps from the present, for all $t'=0,\ldots,t-1$.  We will impose that $\sum_{v\in\cS_i}\sum_{t=1}^Bp^a_{u,v,t}=1$ for all $i\in[n]$, $u\in\cS_i$, and $a\in A$.

\subsection{Simple reductions.}\label{simple_reductions}

We would like to explain why some of the presumptions in the preceding paragraph are WOLOG:
\begin{itemize}
	\item We assumed that all nodes across all arms have the same set $A$ of feasible actions.  This can be easily achieved by taking unions of action sets, and defining inadmissible actions to be duplicates of admissible ones.
	\item We assumed that all transitions take time $t\le B$, with $B$ being the time budget.  Indeed, all transitions $(u,a,v,t)$ with $t>B$ can be amalgamated into the transition $(u,a,v,B)$, since the exact value of $t$ will never be relevant, and all of the rewards $r^a_{u,v,t,t'}$ for $t'\ge B$ will never be accrued.
	\item We assumed that each reward $r^a_{u,v,t,t'}$ is deterministic.  Had they been random with known distributions instead, our problem does not change if we replace each reward with its certainty equivalent, since the objective value only cares about expected reward.  Furthermore, even if the rewards over $t'=0,\ldots,t-1$ were correlated, ie.\ a low reward at $t'=0$ could warn us of lower rewards later on, we cannot interrupt the transition once it begins, so this information is irrelevant.  Note that we do still need to split up the expected reward over $t'=0,\ldots,t$, since our time budget $B$ may run out in the middle of the transition.
	\item We assumed that $\sum_{v\in\cS_i}\sum_{t=1}^Bp^a_{u,v,t}=1$, ie.\ the arm always transitions onto a new node instead of stopping.  This clearly loses no generality since we can always add unit-time self-loop transitions with zero reward.
\end{itemize}

Next, we perform a more complicated reduction to eliminate all transitions with non-unit processing times.  For any transition $(u,a,v,t)$ with $t>1$:
\begin{enumerate}
	\item Add dummy nodes $w_1,\ldots,w_{t-1}$.
	\item Set transition probability $p^a_{u,w_1,1}=p^a_{u,v,t}$, and change $p^a_{u,v,t}$ to be $0$.
	\item Set transition probabilities $p^b_{w_1,w_2,1}=\ldots=p^b_{w_{t-1},v,1}=1$ for all $b\in A$.
	\item Set all other transition probabilities involving $w_1,\ldots,w_{t-1}$ to be $0$.
	\item Set rewards $r^a_{u,w_1,1,0}=r^a_{u,v,t,0}$, $r^b_{w_1,w_2,1,0}=r^a_{u,v,t,1},\ldots,r^b_{w_{t-1},v,1,0}=r^a_{u,v,t,t-1}$ for all $b\in A$.
\end{enumerate}
We call $w_1,\ldots,w_{t-1}$ \emph{bridge} nodes.  So long as we enforce the bridge nodes must be played as soon as they are reached, it is clear that the new problem is equivalent to the old problem.  Repeat this process over all transitions $(u,a,v,t)$ with $t>1$.  We will eliminate the subscripts $t,t'$ and just write $p^a_{u,v}$, $r^a_{u,v}$ now that all transitions with $t>1$ have been reduced to occur with probability $0$.

It will also be convenient to eliminate the $v$ subscript from $r^a_{u,v}$ now that there are no more processing times.  For all $i\in[n]$, $u\in\cS_i$, and $a\in A$, define $r^a_u=\sum_{v\in\cS_i}p^a_{u,v}\cdot r^a_{u,v}$, and consider the Markov decision process that earns deterministic reward $r^a_u$ every time action $a$ is played on node $u$, instead of a random reward $r^a_{u,v}$ that depends on the destination node $v$.  Under the objective of maximizing expected reward, the two models are equivalent, as explained in the third bullet above.

Finally, we assume we have converted each rooted Markov decision process into a layered acyclic digraph, up to depth $B$.  That is, there exists a function $\depth$ mapping nodes to $0,\ldots,B$ such that $\depth(\rho_i)=0$ for all $i\in[n]$, and all transitions $(u,a,v)$ with $p^a_{u,v}>0$ satisfy $\depth(v)=\depth(u)+1$.  This can be done by expanding each node in the original graph into a time-indexed copy of itself for $t=1,\ldots,B$---we refer to \cite[appx.~E.1]{GKMR11} for the standard reduction, which immediately generalizes to the case of Markov decision processes with bridge nodes.

\begin{remark}\label{computational_overhead}
	We would like to point out that our reductions and transformations are stated in a way to maximize ease of exposition.  While they are all polynomial-time, it would reduce computational overhead to remove irrelevant nodes before implementation.
\end{remark}

\subsection{Problem statement.} After all the reductions, let $\cS_i$ denote the set of nodes of arm $i$, and let $\cS=\bigcup_{i=1}^n\cS_i$.  Let $\cB\subset\cS$ denote the set of bridge nodes; note that $\rho_i\in\cB$ for any $i$ is not possible.  Let $\Par(u)=\{(v,a)\in\cS\times A:p^a_{v,u}>0\}$, the (node, action) combinations that have a positive probability of transitioning to $u$.

Each Markov decision process $i$ starts on its root node, $\rho_i$.  At each time step, we choose an arm to play along with an action $a$, getting reward $r^a_u$, where $u$ is the node that arm was on.  We realize the transition that is taken before deciding the arm and action for the next time step.  Of course, if $u$ transitions onto a bridge node, then we have no decision in the next time step, being forced to play the same arm again, say with a default action $\alpha\in A$.  For convenience, we will also allow ourselves to play no arm at a time step\footnote{Since rewards are non-negative, doing nothing cannot be optimal.  However, it makes the analysis cleaner.}.  The objective is to maximize the expected reward accrued after a budget of $B$ time steps.

Algorithms for this problem are described in the form of an adaptive \emph{policy}, a specification of which arm and action to play for each state the system could potentially be in.  A state in this case is determined by the following information: the node each arm is on, and the time step we are at\footnote{Even though we have converted all Markov decision processes into layered acyclic digraphs, we cannot deduce the time elapsed from the nodes each arm is on, since we allow ourselves to not play any arm at a time step.  Therefore, the time step must be included separately in the state information.}.  The optimal policy could theoretically be obtained by dynamic programming, but of course there are exponentially many states, so this is impractical.

However, we still write the Bellman state-updating equations as constraints to get a linear program whose feasible region is precisely the set of admissible policies.  After adding in the objective function of maximizing expected reward, solving this exponential-sized linear program would be equivalent to solving our problem to optimality.

First we need a bit more notation.  Let $\bcS=\cS_1\times\ldots\times\cS_n$; we call its elements \emph{joint nodes}.  For $\pi\in\bcS$ and $u\in\cS_i$, let $\pi^u$ denote the joint node where the $i$'th component of $\pi$ has been replaced by $u$.  A state can be defined by a joint node $\pi$ with a time $t$.  Let $y_{\pi,t}$ be the probability of having arms on nodes according to $\pi$ at the beginning of time $t$.  Let $z^a_{\pi,i,t}$ be the probability we play arm $i$ at time $t$ with action $a$, when the arms are on nodes according to $\pi$.  Some $(\pi,t)$ pairs are impossible states; for example we could never be at a joint node with two or more arms on bridge nodes, and we could not get an arm on a node of depth $5$ at the beginning of time $5$.  However, we still have variables for these $(\pi,t)$ pairs\footnote{Once again, we prioritize notational convenience over computational conciseness; see Remark~\ref{computational_overhead}.}.

Our objective is
\begin{equation}
\label{E0}
\max\sum_{\pi\in\bcS}\sum_{i=1}^n\sum_{a\in A}r^a_{\pi_i}\sum_{t=1}^B z^a_{\pi,i,t}
\end{equation}
with the following constraints on how we can play the arms:
\begin{subequations}
	\begin{align}
	\sum_{i=1}^n\sum_{a\in A}z^a_{\pi,i,t} & \le y_{\pi,t} && \pi\in\bcS,\ t\in[B] \label{E2a} \\
	z^{\alpha}_{\pi,i,t} & =y_{\pi,t} && \pi\in\bcS,\ i:\pi_i\in\cB,\ t\in[B] \label{E2b} \\
	z^a_{\pi,i,t} & \ge0 && \pi\in\bcS,\ i\in[n],\ a\in A,\ t\in[B] \label{E3}
	\end{align}
\end{subequations}
The novel constraint is (\ref{E2b}), which guarantees that we must play a bridge node upon arrival.  The remaining constraints update the $y_{\pi,t}$'s correctly:
\begin{subequations}
	\begin{align}
	y_{(\rho_1,\ldots,\rho_n),1} & =1 && \label{E4a} \\
	y_{\pi,1} & =0 && \pi\in\bcS\setminus\{(\rho_1,\ldots,\rho_n)\} \label{E4b} \\
	y_{\pi,t} & = y_{\pi,t-1}-\sum_{i=1}^n\sum_{a\in A}z^a_{\pi,i,t-1}+\sum_{i=1}^n\sum_{(u,a)\in\Par(\pi_i)}z^a_{\pi^u,i,t-1}\cdot p^a_{u,\pi_i} && t>1,\ \pi\in\bcS \label{E5}
	\end{align}
\end{subequations}

Essentially, the only decision variables are the $z$-variables; there are as many $y$-variables as equalities in (\ref{E4a})-(\ref{E5}).  These constraints guarantee $\sum_{\pi\in\bcS}y_{\pi,t}=1$ for all $t\in[B]$, and combined with (\ref{E2a}), we obtain
\begin{align}
\sum_{\pi\in\bcS}\sum_{i=1}^n\sum_{a\in A}z^a_{\pi,i,t} & \le1 && t\in[B] \label{E1}
\end{align}

Let $\mathtt{(ExpLP)}$ denote the linear program defined by objective (\ref{E0}) and constraints (\ref{E2a})-(\ref{E3}), (\ref{E4a})-(\ref{E5}) which imply (\ref{E1}).  This formally defines our problem, which we will call \emph{MAB superprocess with multi-period actions}.

\subsection{No preemption variant.} We describe the variant of the problem where preemption is not allowed.  For each arm $i\in[n]$, we add a terminal node $\phi_i$.  The arm transitions onto $\phi_i$ if, at a time step, we don't play it while it's on a non-root node.

Now we write the exponential-sized linear program for this problem.  Let $\cS_i'=\cS_i\cup\{\phi_i\}$ for all $i\in[n]$.  Let $\bcS'=\cS_1'\times\ldots\times\cS_n'\setminus\{\pi:\pi_i\notin\{\rho_i,\phi_i\},\pi_j\notin\{\rho_j,\phi_j\},i\neq j\}$, where we have excluded the joint nodes with two or more arms in the middle of being processed, since this is impossible without preemption.  For $\pi\in\bcS'$, let $I(\pi)=\{i:\pi_i\neq\phi_i\}$, the indices of arms that could be played from $\pi$.

The objective is
\begin{equation}
\label{E0'}
\max\sum_{\pi\in\bcS'}\sum_{i\in I(\pi)}\sum_{a\in A}r^a_{\pi_i}\sum_{t=1}^B z^a_{\pi,i,t}
\end{equation}
with very similar constraints on the $z$-variables:
\begin{subequations}
	\begin{align}
	\sum_{i\in I(\pi)}\sum_{a\in A}z^a_{\pi,i,t} & \le y_{\pi,t} && \pi\in\bcS',\ t\in[B] \label{E2a'} \\
	z^{\alpha}_{\pi,i,t} & =y_{\pi,t} && \pi\in\bcS',\ i:\pi_i\in\cB,\ t\in[B] \label{E2b'} \\
	z^a_{\pi,i,t} & \ge0 && \pi\in\bcS',\ i\in I(\pi),\ a\in A,\ t\in[B] \label{E3'}
	\end{align}
\end{subequations}
The only difference from (\ref{E2a})-(\ref{E3}) is that arms on terminal nodes cannot be played.  However, the state-updating constraints become more complicated, because now an arm can make a transition even while it is not being played, namely the transition to the terminal node.  Let $\bcA_i=\{\pi\in\bcS':\pi_i\notin\{\rho_i,\phi_i\}\}$, the joint nodes with arm $i$ in the middle of being processed.  We call arm $i$ the \emph{active} arm.  Let $\bcA=\bigcup_{i=1}^n\bcA_i$.  For $\pi\in\bcS'$, let $\bcP(\pi)$ denote the subset of $\bcS'$ that would transition to $\pi$ with no play: if $\pi\notin\bcA$, then $\bcP(\pi)=\{\pi\}\cup\big(\bigcup_{i\notin I(\pi)}\{\pi^u:u\in\cS_i\setminus\{\rho_i\}\}\big)$; if $\pi\in\bcA$, then $\bcP(\pi)=\emptyset$.  With this notation, we update the $y$-variables as follows:
\begin{subequations}
	\begin{align}
	y_{(\rho_1,\ldots,\rho_n),1} & =1 && \label{E4a'} \\
	y_{\pi,1} & =0 && \pi\in\bcS'\setminus\{(\rho_1,\ldots,\rho_n)\} \label{E4b'} \\
	y_{\pi,t} & =\sum_{\pi'\in\bcP(\pi)}\Big(y_{\pi',t-1}-\sum_{i\in I(\pi')}\sum_{a\in A}z^a_{\pi',i,t-1}\Big) && t>1,\ \pi\in\bcS'\setminus\bcA \label{E5a'} \\
	y_{\pi,t} & =\sum_{a:(\rho_i,a)\in\Par(\pi_i)}\Big(\sum_{\pi'\in\bcP(\pi^{\rho_i})}z^a_{\pi',i,t-1}\Big)\cdot p^a_{\rho_i,\pi_i} && t>1,\ i\in[n],\ \pi\in\bcA_i,\ \depth(\pi_i)=1 \label{E5b'} \\
	y_{\pi,t} & =\sum_{(u,a)\in\Par(\pi_i)}z^a_{\pi^u,i,t-1}\cdot p^a_{u,\pi_i} && t>1,\ i\in[n],\ \pi\in\bcA_i,\ \depth(\pi_i)>1 \label{E5c'}
	\end{align}
\end{subequations}

(\ref{E5a'}) updates $y_{\pi,t}$ for $\pi\notin\bcA$, ie.\ joint nodes with no active arms.  Such a joint node $\pi$ can only be arrived upon by making no play from a joint node in $\bcP(\pi)$.

(\ref{E5b'}), (\ref{E5c'}) update $y_{\pi,t}$ for $\pi\in\bcA$.  To get to joint node $\pi\in\bcA_i$, we must have played arm $i$ during the previous time step and transitioned to node $\pi_i$.  However, the restrictions on the previous joint node depend on whether $\depth(\pi_i)=1$.  If so, then arm $i$ was on $\rho_i$ at time step $t-1$, so it's possible to get to $\pi$ from any joint node in $\bcP(\pi^{\rho_i})$.  That is, in the previous joint node, there could have been an active arm that is not $i$.  This is reflected in (\ref{E5b'}).  On the other hand, if $\depth(\pi_i)>1$, then arm $i$ must have been the active arm at time step $t-1$, as described in (\ref{E5c'}).

Like before, these equations guarantee that at each time step, we are at exactly one joint node, ie.\ $\sum_{\pi\in\bcS'}y_{\pi,t}=1$.  Combined with (\ref{E2a'}), we obtain
\begin{align}
\sum_{\pi\in\bcS'}\sum_{i\in I(\pi)}\sum_{a\in A}z^a_{\pi,i,t} & \le1 && t\in[B] \label{E1'}
\end{align}

Let $\mathtt{(ExpLP')}$ denote the linear program defined by objective (\ref{E0'}) and constraints (\ref{E2a'})-(\ref{E3'}), (\ref{E4a'})-(\ref{E5c'}) which imply (\ref{E1'}).  This formally defines \emph{MAB superprocess with multi-period actions---no preemption}.

\subsection{Reductions from SK and MAB.}\label{subsection_with_diagrams}

Before we proceed, let's describe how this model generalizes the problems discussed in the introduction.  The generalization of the setting of Markovian Bandits (and its non-preempting analogue Irrevocable Bandits) is immediate: define the action set $A$ to only contain some default action $\alpha$, and don't have any bridge nodes.

For all variants of SK, the necessary reductions have already been described in Subsection~\ref{simple_reductions}, but the following examples should reiterate the power of our model:
\begin{itemize}
	\item Consider a job that takes time $5$ with probability $\frac{1}{3}$, and time $2$ with probability $\frac{2}{3}$ (and cannot be canceled once started).  If it finishes, the reward returned is $2$, independent of processing time.  This can be modeled by fig.~\ref{csk1} where $r_B=\frac{4}{3}$, $r_E=2$, and $\cB=\{B,C,D,E\}$.  Note that instead of placing reward $2$ on arc $(B,C')$, we have equivalently placed reward $\frac{4}{3}$ on node $B$.  A corollary of this reduction is that the following reward structure is equivalent to the original for the objective of maximizing expected reward: a guaranteed reward of $\frac{4}{3}$ after $2$ time steps, after which the job may run for another $3$ time steps to produce an additional $2$ reward.
	\item Consider the same job as the previous one, except the reward is $4$ if the processing time was $5$, while the reward is $1$ if the processing time was $2$ (the expected reward for finishing is still $2$).  All we have to change in the reduction is setting $r_B=\frac{2}{3}$ and $r_E=4$ instead.
	\item Consider either of the two jobs above, except cancellation is permitted (presumably on node $C$, after observing the transition from node $B$).  All we have to change in the reduction is setting $\cB=\emptyset$ instead.
	\item Consider the job from the second bullet that can be canceled, and furthermore, we find out after $1$ time step whether it will realize to the long, high-reward job or the short, low-reward job.  This can be modeled by fig.~\ref{csk2} where $r_{B'}=1$, $r_E=4$, and $\cB=\emptyset$.
\end{itemize}
\begin{figure}[t]
	\begin{center}
		\includegraphics[width=0.8\textwidth]{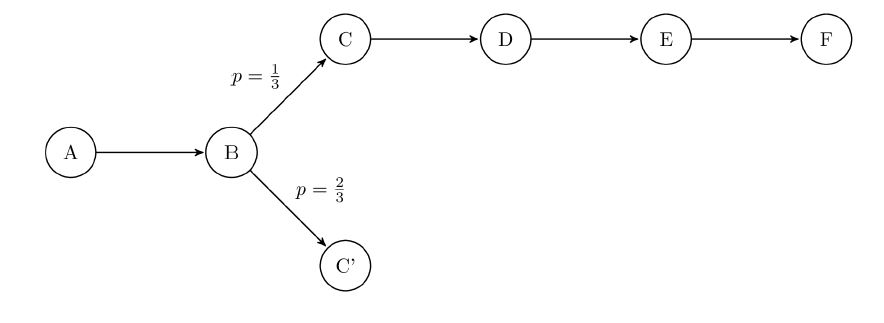}
		\caption{A Markov chain representing a SK job with correlated rewards.\label{csk1}}
	\end{center}
\end{figure}
\begin{figure}[t]
	\begin{center}
		\includegraphics[width=0.8\textwidth]{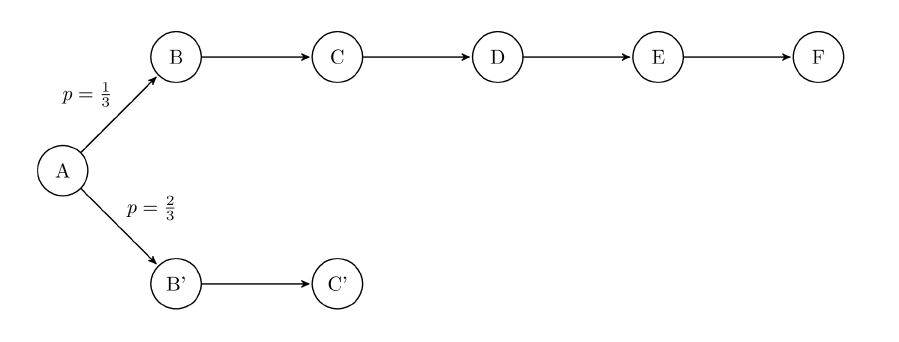}
		\caption{Another Markov chain representing a SK job with correlated rewards.\label{csk2}}
	\end{center}
\end{figure}
Whether preemption is allowed is determined at a global level, although this is irrelevant if no job can be canceled in the first place.  We would like to point out that preemption can be necessary for optimality even under the simplest setting of uncorrelated SK (Appendix~A), so disallowing preemption results in a distinct problem.

For Futuristic Bandits, suppose the exploration phase contains $T$ time steps.  Then we set $B=2T+1$ and add to each node a separate ``exploit'' action that returns reward $r_u$ in processing time $T+1$ (and there is no other way to obtain reward).  Clearly, we can explore for at most $T$ time steps if we are going to earn any reward at all\footnote{Note that it is never beneficial in a Martingale setting to not make full use of the $T$ exploration steps.}.  $B$ is chosen to be $2T+1$ so that it is impossible to collect exploitation rewards from more than one arm.  Budgeted Bandits can be modeled by combining the reductions for Stochastic Knapsack and Futuristic Bandits.

\subsection{Polynomial-sized LP relaxations.}\label{lp_relaxation_section}

We now write the polynomial-sized LP relaxations of our earlier problems.  We keep track of the probabilities of being on the nodes of each arm individually without considering their joint distribution.  Let $s_{u,t}$ be the probability arm $i$ is on node $u$ at the beginning of time $t$.  Let $x^a_{u,t}$ be the probability we play action $a$ on node $u$ at time $t$.

For both variants of the problem, we have the objective
\begin{equation}
\max\sum_{u\in\cS}\sum_{a\in A}r^a_u\sum_{t=1}^B x^a_{u,t} \label{P0}
\end{equation}
and constraints on how we can play each individual arm:
\begin{subequations}
	\begin{align}
	\sum_{a\in A}x^a_{u,t} & \le s_{u,t} && u\in\cS,\ t\in[B] \label{P2a} \\
	x^{\alpha}_{u,t} & =s_{u,t} && u\in\cB,\ t\in[B] \label{P2b} \\
	x^a_{u,t} & \ge0 && u\in\cS,\ a\in A,\ t\in[B] \label{P3}
	\end{align}
\end{subequations}
Furthermore, there is a single constraint 
\begin{align}
\sum_{u\in\cS}\sum_{a\in A}x^a_{u,t} & \le1 && t\in[B] \label{P1}
\end{align}
enforcing that the total probabilities of plays across all arms cannot exceed $1$ at any time step.

The state-updating constraints differ for the two variants of the problem.  If we allow preemption, then they are:
\begin{subequations}
	\begin{align}
	s_{\rho_i,1} & =1 && i\in[n] \label{P4a} \\
	s_{u,1} & =0 && u\in\cS\setminus\{\rho_1,\ldots,\rho_n\} \label{P4b} \\
	s_{u,t} & =s_{u,t-1}-\sum_{a\in A}x^a_{u,t-1}+\sum_{(v,a)\in\Par(u)}x^a_{v,t-1}\cdot p^a_{v,u} && t>1,\ u\in\cS \label{P5}
	\end{align}
\end{subequations}
If we disallow preemption, then an arm can only be on a non-root node if we played the same arm during the previous time step.  This is reflected in (\ref{P5a'})-(\ref{P5b'}):
\begin{subequations}
	\begin{align}
	s_{\rho_i,1} & =1 && i\in[n] \label{P4a'} \\
	s_{u,1} & =0 && u\in\cS\setminus\{\rho_1,\ldots,\rho_n\} \label{P4b'} \\
	s_{\rho_i,t} & =s_{\rho_i,t-1}-\sum_{a\in A}x^a_{\rho_i,t-1} && t>1,\ i\in[n] \label{P5a'} \\
	s_{u,t} & =\sum_{(v,a)\in\Par(u)}x^a_{v,t-1}\cdot p^a_{v,u} && t>1,\ u\in\cS\setminus\{\rho_1,\ldots,\rho_n\} \label{P5b'}
	\end{align}
\end{subequations}

Let $\mathtt{(PolyLP)}$ denote the linear program defined by objective (\ref{P0}) and constraints (\ref{P2a})-(\ref{P3}), (\ref{P1}), (\ref{P4a})-(\ref{P5}).  Similarly, let $\mathtt{(PolyLP')}$ denote the linear program defined by objective (\ref{P0}) and constraints (\ref{P2a})-(\ref{P3}), (\ref{P1}), (\ref{P4a'})-(\ref{P5b'}).  We still have to prove the polynomial-sized linear programs are indeed relaxations of the exponential-sized linear programs.  For any linear program LP, let $\OPT_{\mathtt{LP}}$ denote its optimal objective value.

\begin{lemma}
	\label{LPproj}
	Given a feasible solution $\{z^a_{\pi,i,t}\},\{y_{\pi,t}\}$ to $\mathtt{(ExpLP)}$, we can construct a solution to $\mathtt{(PolyLP)}$ with the same objective value by setting $x^a_{u,t}=\sum_{\pi\in\bcS:\pi_{i}=u}z^a_{\pi,i,t}$, $s_{u,t}=\sum_{\pi\in\bcS:\pi_{i}=u}y_{\pi,t}$ for all $i\in[n],u\in\cS_i,a\in A,t\in[B]$.  Thus the feasible region of $\mathtt{(PolyLP)}$ is a projection of that of $\mathtt{(ExpLP)}$ onto a subspace and $\OPT_{\mathtt{ExpLP}}\le\OPT_{\mathtt{PolyLP}}$.
\end{lemma}

\begin{lemma}
	\label{LPprimeproj}
	Given a feasible solution $\{z^a_{\pi,i,t}\},\{y_{\pi,t}\}$ to $\mathtt{(ExpLP')}$, we can construct a solution to $\mathtt{(PolyLP')}$ with the same objective value by setting $x^a_{u,t}=\sum_{\pi\in\bcS':\pi_{i}=u}z^a_{\pi,i,t}$, $s_{u,t}=\sum_{\pi\in\bcS':\pi_{i}=u}y_{\pi,t}$ for all $i\in[n],u\in\cS_i,a\in A,t\in[B]$.  Thus the feasible region of $\mathtt{(PolyLP')}$ is a projection of that of $\mathtt{(ExpLP')}$ onto a subspace and $\OPT_{\mathtt{ExpLP'}}\le \OPT_{\mathtt{PolyLP'}}$.
\end{lemma}

Recall that the feasible regions of the exponential-sized linear programs correspond exactly to the admissible policies.  These lemmas say that the performance of any adaptive policy can be upper bounded by the polynomial-sized relaxations.  Our lemmas are generalizations of similar statements from earlier works \cite[lem.~2.1---for example]{GKMR11}, but put into the context of an exponential-sized linear program.  The proofs are mostly technical and will be deferred to Appendix~B.

\subsection{Main results.} Now that we have established the preliminaries, we are ready to state our main results in the form of theorems.

\begin{theorem}
	\label{mainresult1}
	Given a feasible solution $\{x^a_{u,t}\},\{s_{u,t}\}$ to $\mathtt{(PolyLP')}$, there exists a solution to $\mathtt{(ExpLP')}$ with $\sum_{\pi:\pi_{i}=u}z^a_{\pi,i,t}=\frac{1}{2}x^a_{u,t}$, $\sum_{\pi:\pi_{i}=u}y_{\pi,t}=\frac{1}{2}s_{u,t}$ for all $i\in[n],u\in\cS_i,a\in A,t\in[B]$, obtaining reward $\frac{1}{2}\OPT_{\mathtt{PolyLP'}}$.  We can use sampling to turn this into a $(\frac{1}{2}-\varepsilon)$-approximation algorithm for MAB superprocess with multi-period actions---no preemption, with runtime polynomial in the input and $\frac{1}{\varepsilon}$.
\end{theorem}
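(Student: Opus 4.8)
The plan is to mirror, and lift to superprocesses with multi-period actions, the proof of Theorem~\ref{mainresult1simple} and of the $(\tfrac12-\varepsilon)$-approximation following it. Fix a feasible solution $\{x^a_{u,t}\},\{s_{u,t}\}$ of $\mathtt{(PolyLP')}$. I aim to build an admissible adaptive policy in which, unconditionally, action $a$ is played on node $u$ at time $t$ with probability exactly $\tfrac12 x^a_{u,t}$. Reading off this policy's play- and state-probabilities produces a feasible point of $\mathtt{(ExpLP')}$ (the converse direction of Lemma~\ref{LPprimeproj}), and by $\mathtt{(P0)}$ its objective is $\tfrac12\OPT_{\mathtt{PolyLP'}}$, so $\OPT_{\mathtt{ExpLP'}}\ge\tfrac12\OPT_{\mathtt{PolyLP'}}$ and the projection gap is at most $2$. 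As in Theorem~\ref{mainresult1simple}, this policy is only ``theoretically admissible''; the second half of the proof makes it polynomial-time at the cost of $\varepsilon$.

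The policy has a \emph{within-arm rule} and a \emph{scheduling rule}. Within-arm rule: whenever the arm being processed sits on a non-root node $u$ at time $t$, play action $a$ with probability $x^a_{u,t}/s_{u,t}$, and with the remaining probability $1-\sum_a x^a_{u,t}/s_{u,t}\ge 0$ (nonnegative by $\mathtt{(P2a)}$; equal to $0$ for $u\in\cB$ by $\mathtt{(P2b)}$, so bridge nodes are played on arrival with $\alpha$) \emph{abandon} the arm. No flow decomposition (as in Section~2) is needed here: without preemption an arm started at time $t_0$ reaches $u$ at the deterministic time $t_0+\depth(u)$, so the time index at $u$ is pinned down. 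A short induction on $\depth(u)$ then gives the invariant $\Pr[\text{arm on }u\text{ at start of }t]=\tfrac12 s_{u,t}$, because the inflow into $u$ at time $t$ is $\sum_{(v,b)\in\Par(u)}\Pr[\text{play }b\text{ on }v\text{ at }t-1]\cdot p^b_{v,u}=\sum_{(v,b)\in\Par(u)}\tfrac12 x^b_{v,t-1}p^b_{v,u}=\tfrac12 s_{u,t}$ by $\mathtt{(P5b')}$; hence $\Pr[\text{play }a\text{ on }u\text{ at }t]=\tfrac12 x^a_{u,t}$ for non-root $u$.

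The scheduling rule fixes the induction base (the roots). At time $t$, after the within-arm rule has possibly abandoned the active arm (or if there is none), the policy reaches a \emph{decision point}; let $\cG_{i,t}$ be the event that this decision point is reached at time $t$ and arm $i$ is still fresh, and $G(i,t)=\Pr[\cG_{i,t}]$. Conditioned on $\cG_{i,t}$, start arm $i$ with action $a$ with probability $x^a_{\rho_i,t}/(2G(i,t))$ (do nothing with the remainder), which gives $\Pr[\text{play }a\text{ on }\rho_i\text{ at }t]=\tfrac12 x^a_{\rho_i,t}$. The one thing to verify is that these conditional probabilities sum (over fresh $i$ and over $a$) to at most $1$, i.e.\ $G(i,t)\ge\tfrac12\sum_j\sum_a x^a_{\rho_j,t}$ for every $i$. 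By a union bound $1-G(i,t)$ is at most $\Pr[\text{arm }i\text{ started before }t]+\Pr[\text{decision point not reached at }t]$; the first term is $\tfrac12(1-s_{\rho_i,t})$ by $\mathtt{(P5c')}$, and the decision point fails to be reached exactly when the active arm is \emph{continued} from $t-1$ into $t$, of probability $\sum_{u\text{ non-root}}\tfrac12 s_{u,t}\cdot\tfrac{\sum_a x^a_{u,t}}{s_{u,t}}=\tfrac12\sum_{u\text{ non-root}}\sum_a x^a_{u,t}$. Adding these and using $\mathtt{(P1)}$ at time $t$, $\sum_{u\in\cS}\sum_a x^a_{u,t}\le 1$, yields $G(i,t)\ge\tfrac12+\tfrac12 s_{\rho_i,t}-\tfrac12\sum_{u\text{ non-root}}\sum_a x^a_{u,t}\ge\tfrac12\sum_j\sum_a x^a_{\rho_j,t}$ since $s_{\rho_i,t}\ge 0$. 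The whole thing is one simultaneous induction on $t$: scheduling validity at earlier times gives the marginals $\tfrac12 x^a_{\cdot,\cdot}$, which give the bound on $G(i,t)$, which gives scheduling validity at $t$.

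The step I expect to be the real obstacle is this last lemma, and specifically getting the policy \emph{right} so that it holds. The ``obvious'' policy, in which ``free'' just means ``no arm active at the start of $t$'', gives only the much weaker $G(i,t)\ge\tfrac12\sum_a x^a_{\rho_i,t}$, because an arm \emph{started} at $t$ occupies time $t$ yet is not ``active'' at the start of $t$ --- a one-step shift that the atomic-job proof of Theorem~\ref{mainresult1simple} never sees. Letting the policy abandon the active arm \emph{and} start a fresh one within the same time step (legal, since a time step carries exactly one play) is precisely what realigns the ``slot occupied'' probability with $\mathtt{(P1)}$ at time $t$ instead of $t-1$. For the algorithmic $(\tfrac12-\varepsilon)$ version, the $G(i,t)$ are not poly-time computable, so --- following the $(\tfrac12-\varepsilon)$ knapsack proof almost verbatim --- I would estimate each $G(i,t)$ by $M=\mathrm{poly}(B,n,\varepsilon^{-1})$ Monte Carlo runs of the policy built from the already-recorded estimates for times $t'<t$ (so no error propagates), fall back to the safe value $\tfrac12\sum_j\sum_a x^a_{\rho_j,t}$ when the event is too rare to estimate, insert $(1-\varepsilon)$ slack factors into the start probabilities to absorb two-sided estimation error, and union-bound the $O(Bn)$ failure events with $\delta=\varepsilon/(Bn)$; the within-arm rule needs no sampling since $x^a_{u,t}/s_{u,t}$ are LP data. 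I expect this second half to be technically heavy but a routine transcription of that argument.
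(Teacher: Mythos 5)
Your proposal is correct and, after unwinding the paper's notation, is essentially the same argument: your ``within-arm rule'' with conditional play probability $x^a_{u,t}/s_{u,t}$ on non-root $u$ coincides with the conditionals implicit in the paper's Step~5 (divide the paper's $z^a_{\pi,i,t}$ by $y_{\pi,t}$ and simplify via $\mathtt{(P5b')}$); your decision-point probability $G(i,t)$ is exactly the paper's $f_{i,t}=\sum_{\pi:\pi_i=\rho_i}\ty_{\pi,t}$; and your union-bound derivation of $G(i,t)\ge\tfrac12\sum_j\sum_a x^a_{\rho_j,t}$ via $\mathtt{(P5c')}$ and $\mathtt{(P1)}$ is line-for-line the paper's derivation of $f_{i,t}\ge\tfrac12\sum_j x_{\rho_j,t}$. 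The sampling half is likewise a direct transcription of Section~5.3 with $f_{i,t}$ replaced by $\Free(i,t)$ and a fall-back value of $\tfrac12\sum_j x_{\rho_j,t}$.

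The one genuine difference is presentational: you describe a concrete adaptive policy and invoke the policy~$\leftrightarrow$~$\mathtt{(ExpLP')}$ correspondence to get a feasible point, whereas the paper constructs the variables $\{z^a_{\pi,i,t},y_{\pi,t}\}$ directly in a time-indexed loop and verifies $\mathtt{(E1')}$--$\mathtt{(E5')}$ by hand. Your route is more readable and makes the crucial subtlety explicit and nameable --- the decision point is placed \emph{after} the abandon/continue step so that ``slot occupied at $t$'' aligns with the time-$t$ instance of $\mathtt{(P1)}$ rather than the time-$(t-1)$ one; the paper encodes this same move via the auxiliary quantity $\ty_{\pi,t}=y_{\pi,t}-\sum_a z^a_{\pi,i,t}$ without drawing attention to it. You also correctly note why no flow-decomposition lemma is needed here (depth pins down the time index under no preemption), which the paper leaves implicit. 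I see no gap.
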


We prove this theorem in Section~\ref{proof_section1}, and also show that it is tight, constructing an instance under the special case of correlated SK where it is impossible to obtain reward greater than $(\frac{1}{2}+\varepsilon)\OPT_{\mathtt{(PolyLP')}}$.

\begin{theorem}
	\label{mainresult2}
	There is a $\mathtt{(PolyLP)}$-relative $\frac{4}{27}$-approximation algorithm for MAB superprocess, where all processing times are $1$.
\end{theorem}

\begin{theorem}
	\label{mainresult3}
	There is a $\mathtt{(PolyLP)}$-relative $\frac{1}{12}$-approximation algorithm for MAB superprocess with multi-period actions.
\end{theorem}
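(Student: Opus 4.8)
The plan is to bootstrap from the $\frac{4}{27}$-approximation of Theorems~\ref{mainresult2simple} and~\ref{mainresult2}, since actions (MDPs) are already handled there and the only genuinely new feature here is the presence of bridge nodes. Fix an optimal solution $\{x^a_{u,t},s_{u,t}\}$ of $\mathtt{(PolyLP)}$. As before each arm carries a \emph{status}, now recording a non-bridge node $u$, an action $a$ to be played on it, and a priority $t\in[B]\cup\{\infty\}$. Statuses are assigned only to non-bridge nodes: when we play $a$ on $u$ and the realized transition lands on a bridge node, constraints $\mathtt{(E2b)}$/$\mathtt{(P2b)}$ force us to keep pulling that arm with $\alpha$ until it reaches the next non-bridge node $w$, at which point we draw a new status for $w$. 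Thus each status now triggers a \emph{block} of consecutive forced pulls whose (random) length is the depth gap to the next non-bridge node. I would prove a flow-decomposition lemma analogous to Lemma~\ref{flowdecsimple}, phrased over the variables $\{x^a_{u,t}\}$ and with maximal bridge paths contracted, producing conditional probabilities $q$ that satisfy a capacity bound and the equality needed so that, inducting on $\depth(u)$, the unconditional probability an arm is ever in status $(u,a,t)$ equals $x^a_{u,t}/c$ for a scaling constant $c$; a small point is that this induction threads the bridge blocks, using that the $\alpha$-flow into a bridge node is pinned by $\mathtt{(P2b)}$.

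The algorithm is then the obvious analogue: while some arm has finite priority, pull the lowest-priority arm --- running through its forced bridge pulls --- until it arrives at a status $(w,a,t)$ with $t\ge 2\depth(w)$ (or $\infty$), and repeat. Assuming the policy runs to completion and collects reward only for pulls up to time $B$ decouples each arm's trajectory from the others, so the expected reward is $\sum_{u,a}r^a_u\sum_{t=1}^B\frac{x^a_{u,t}}{c}\cdot\Pr[\timesf(u,a,t)\le t\mid\timesf(u,a,t)<\infty]$, where $\timesf(u,a,t)$ is the time we play $u$ with $a$ from priority $t$. It remains to lower-bound this conditional probability by a constant; multiplied by $\frac1c$, this is where the $\frac1{12}$ appears.

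For the timing bound I would run the same two-case split. Case~2 ($t<2\depth(u)$) reduces, just as in Theorem~\ref{mainresult2simple}, to a non-bridge ancestor of $u$ of strictly smaller priority, iterated down to Case~1. In Case~1 ($t\ge 2\depth(u)$), the event $\{\timesf(u,a,t)>t\}$ still forces the other arms to make at least $t-\depth(u)\ge t/2$ pulls \emph{before $u$'s block starts}, where now ``pulls'' counts the forced bridge pulls inside the competing blocks. As in the vanilla proof one truncates this count to competing blocks whose head has depth $<t/2$ and priority $\le t$: burst-start statuses are consumed in increasing priority order, and within-burst statuses of priority exceeding $t$ have depth exceeding $t/2$ and are absorbed by the ancestor argument. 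The crux is then bounding $\mathbb{E}$ of the total length of these competing blocks; via $\mathtt{(P1)}$ this is $O(t)$, and a Markov-type inequality --- rather than the refined, Samuels-based Lemma~\ref{grind}, which is unavailable because a single block can deposit $\Theta(t)$ pulls at once so the per-arm contributions are no longer sums of many tiny indicators --- turns it into the desired constant-probability guarantee, with a scaling constant $c$ (not necessarily $3$, since a Markov bound is less delicate than Lemma~\ref{grind}) making the product $\frac1{12}$.

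The hard part is this Case~1 estimate: unlike in Theorem~\ref{mainresult2simple}, a competing block can \emph{span} many more time steps than its head's priority, so one must show that the expected total length of competing blocks starting before $u$'s block is still $O(t)$ --- in particular controlling the ``spillover'' of a long bridge chain whose head has priority at most $t$ but which occupies time slots well past $t$ --- and that the loss incurred by replacing Lemma~\ref{grind} with a Markov bound is exactly the degradation from $\frac4{27}$ to $\frac1{12}$. The remaining work is routine: verifying the flow-decomposition lemma with actions and contracted bridge paths, checking the depth induction through bridge blocks, and confirming the Case~2 descent still terminates in the presence of bridge ancestors.
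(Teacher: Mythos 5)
Your skeleton is right --- modify the preemption algorithm to respect forced bridge pulls, restrict the Case~1 count, and replace Lemma~\ref{grind} by a Markov bound with a re-optimized scaling constant. But the one technical step you flag as the ``hard part'' is exactly where your sketch goes wrong, and the paper's resolution differs from what you wrote. You claim that ``within-burst statuses of priority exceeding $t$ have depth exceeding $t/2$ and are absorbed by the ancestor argument.'' This is false: take a block whose non-bridge head $w$ has $\depth(w)=0$ and priority $t_w=t$, and a bridge node $v$ at depth $5$ inside that block. Since bridge priorities advance by $1$ per forced pull, $v$ receives priority $t+5>t$ while $\depth(v)=5$ can be far below $t/2$. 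So priority $>t$ for a bridge node does not force depth $>t/2$, and you cannot truncate the competing sum at $t'\le t$ as in the vanilla Case~1 argument.

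The paper's fix is to truncate at $t'\le\frac{3t}{2}$ instead. Nodes at depth $\ge t/2$ are still absorbed by the ancestor argument exactly as before. Among the remaining bridge nodes (depth $<t/2$), the closest non-bridge ancestor $w$ satisfies $\depth(w)<\depth(v)<t/2$, and since $w$ is non-bridge the original argument forces $t_w\le t$ for the block to start before $u$'s; bridge priorities being consecutive then gives $t'=t_w+(\depth(v)-\depth(w))<t+\tfrac{t}{2}$. Consequently the budget cost becomes $\sum_{t'\le 3t/2}\sum_{v,b}x^b_{v,t'}/c\le 3t/(2c)$ rather than $t/c$, and re-optimizing the scaling to $c=6$ gives $\mathbb{E}\bigl[\sum_j Y_j\bigr]\le t/4$, whence Markov yields $\Pr[\sum_j Y_j\ge t/2]\le 1/2$ and the product $\tfrac16\cdot\tfrac12=\tfrac1{12}$. (The paper also explains why you genuinely cannot recover Lemma~\ref{grind} here: with the new parameters the grouping step only reduces to $n\le 5$, and Samuels' conjecture is open for $n=5$.) In short, the constants are not free parameters to ``make the product $1/12$''; they are pinned by the $\tfrac{3t}{2}$ truncation, and you need that argument explicitly rather than the incorrect $t$-truncation you sketched. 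Your contraction of bridge paths into blocks is a reasonable alternative bookkeeping to the paper's per-node-status approach, but it does not sidestep this quantitative step.
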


We prove these theorems in Section~\ref{proof_section2}.

\section{Proof of Theorem~\ref{mainresult1}.}\label{proof_section1}

In this section we prove Theorem~\ref{mainresult1}.  To build intuition, we will first present the upper bound, showing a family of examples with $\frac{\OPT_{\mathtt{ExpLP'}}}{\OPT_{\mathtt{PolyLP'}}}$ approaching $\frac{1}{2}$.

\subsection{Construction for upper bound.}\label{upper_bound}

Let $N$ be a large integer.  We will describe our $n=2$ arms as stochastic jobs.  Job $1$ takes $N+1$ time with probability $1-\frac{1}{N}$, in which case it returns a reward of $1$.  It takes $1$ time with probability $\frac{1}{N}$, in which case it returns no reward.  Job $2$ deterministically takes $1$ time and returns a reward of $1$.  The budget is $B=N+1$ time steps.

Any actual policy can never get more than $1$ reward, since it cannot get a positive reward from both jobs.  To describe the solution to $(\mathtt{PolyLP'})$ earning more reward, we need to define some notation for the Markov chains representing the stochastic jobs.  A diagram for this reduction was shown in Subsection~\ref{subsection_with_diagrams}.

Let $\cS_1=\{S_0,S_1,\ldots,S_N,\phi_1\}$, with $\rho_1=S_0$.  There is only one action, and we will omit the action superscripts.  The only uncertainty is at $S_0$, with $p_{S_0,S_1}=1-\frac{1}{N},p_{S_0,\phi_1}=\frac{1}{N}$.  The remaining transitions are $p_{S_1,S_2}=\ldots=p_{S_{N-1},S_N}=p_{S_N,\phi_1}=1$, and self loop on the terminal node $p_{\phi_1,\phi_1}=1$.  The only reward is a reward of $1$ on node $S_n$.  Meanwhile, $\cS_2$ consists only of nodes $\{\rho_2,\phi_2\}$, with $p_{\rho_2,\phi_2}=p_{\phi_2,\phi_2}=1$, $r_{\rho_2}=1$.

It can be checked that $x_{S_0,1}=1,x_{S_1,2}=\ldots=x_{S_N,N+1}=1-\frac{1}{N},x_{\rho_2,2}=\ldots=x_{\rho_2,N+1}=\frac{1}{N}$ is a feasible solution for $\mathtt{(PolyLP')}$.  Its objective value is $2-\frac{1}{N}$, hence as we take $N\to\infty$, we get $\frac{\OPT_{\mathtt{ExpLP'}}}{\OPT_{\mathtt{PolyLP'}}}=\frac{1}{2}$.

Note that we can put all of $\cS_1\setminus\{\rho_1,\phi_1\}$ in $\cB$ if we want; it doesn't change the example whether job $1$ can be canceled once started.  It also doesn't matter whether we allow preemption---both $\frac{\OPT_{\mathtt{ExpLP}}}{\OPT_{\mathtt{PolyLP}}}$ and $\frac{\OPT_{\mathtt{ExpLP'}}}{\OPT_{\mathtt{PolyLP'}}}$ are $\frac{1}{2}+\varepsilon$ for this example.

Let's analyze what goes wrong when we attempt to replicate the optimal solution to the LP relaxation in an actual policy.  We start job $1$ at time $1$ with probability $x_{S_0,1}=1$.  If it does not terminate after $1$ time step, which occurs with probability $1-\frac{1}{N}$, then we play job $1$ through to the end, matching $x_{S_1,2}=\ldots=x_{S_N,N+1}=1-\frac{1}{N}$.  If it does, then we start job $2$ at time $2$.  This occurs with unconditional probability $x_{\rho_2,2}=\frac{1}{N}$, as planned.  However, in this case, we cannot start job $2$ again at time $3$ (since it has already been processed at time $2$), even though $x_{\rho_2,3}=\frac{1}{N}$ is telling us to do so.  The LP relaxation fails to consider that event ``job $1$ takes time $1$'' is \emph{directly correlated} with event ``job $2$ is started at time $2$'', so the positive values specified by $x_{\rho_2,3},\ldots,x_{\rho_2,N+1}$ are illegal plays.

Motivated by this example, we observe that if we only try to play $u$ at time $t$ with probability $\frac{x_u,t}{2}$, then we can obtain a solution to $\mathtt{(ExpLP')}$ (and hence a feasible policy) that is a scaled copy of the solution to $\mathtt{(PolyLP')}$.

\subsection{Technical specification of solution to (ExpLP').} Fix a solution $\{x^a_{u,t},s_{u,t}\}$ to $\mathtt{(PolyLP')}$.  Our objective in this subsection is to construct a solution $\{z^a_{\pi,i,t},y_{\pi,t}\}$ to $\mathtt{(ExpLP')}$ such that
\begin{align}
\sum_{\pi\in\bcS':\pi_i=u}z^a_{\pi,i,t} & =\frac{x^a_{u,t}}{2} && i\in[n],\ u\in\cS_i,\ a\in A \label{daggers}
\end{align}
obtaining half the objective value of $\mathtt{(PolyLP')}$.  We will prove feasibility in Subsection~\ref{feasibility}.

For convenience, define $x_{u,t}=\sum_{a\in A}x^a_{u,t}$ and $z_{\pi,i,t}=\sum_{a\in A}z^a_{\pi,i,t}$.  We will complete the specification of $\{z^a_{\pi,i,t},y_{\pi,t}\}$ over $B$ iterations $t=1,\ldots,B$.  On iteration $t$:
\begin{enumerate}
	\item Compute $y_{\pi,t}$ for all $\pi\in\bcS'$.
	\item Define $\ty_{\pi,t}=y_{\pi,t}$ if $\pi\notin\bcA$, and $\ty_{\pi,t}=y_{\pi,t}-\sum_{a\in A}z^a_{\pi,i,t}$ if $\pi\in\bcA_i$ for some $i\in[n]$ (if $\pi\in\bcA_i$, then $\{z^a_{\pi,i,t}:a\in A\}$ has already been set in a previous iteration).
	\item For all $i\in[n]$, define $f_{i,t}=\sum_{\pi\in\bcS':\pi_i=\rho_i}\ty_{\pi,t}$.
	\item For all $i\in[n]$, $\pi\in\bcS'$ such that $\pi_i=\rho_i$, and $a\in A$, set $z^a_{\pi,i,t}=\ty_{\pi,t}\cdot\frac{1}{2}\cdot\frac{x^a_{\rho_i,t}}{f_{i,t}}$.
	\item For all $i\in[n]$, and $\pi\in\bcS'$ such that $\pi_i=\rho_i$ and $\pi_j\in\{\rho_j,\phi_j\}$ for $j\neq i$, define $g_{\pi,i,t}=\sum_{\pi'\in\bcP(\pi)}z_{\pi',i,t}$.
	\item For all $i\in[n]$, $u\in\cS_i\setminus\{\rho_i\}$, $\pi\in\bcS'$ such that $\pi_i=u$, and $a\in A$, set $z^a_{\pi,i,t+\depth(u)}=g_{\pi^{\rho_i},i,t}\cdot\frac{x^a_{u,t+\depth(u)}}{x_{\rho_i,t}}$.
\end{enumerate}

The motivation for Step~2 is that we want $\ty_{\pi,t}$ to represent the probability we are at joint node $\pi$ and looking to start a new arm at time $t$, abandoning the arm in progress if there is any.  In Step~3, $f_{i,t}$ is the total probability we would be ready to start playing arm $i$ at time $t$.  The normalization in Step~4 ensures that each arm is started with the correct probability at time $t$.  In Step~5, $g_{\pi,i,t}$ is the probability arm $i$ is started at time $t$, and other arms are on nodes $\{\pi_j:j\neq i\}$ while arm $i$ executes (another arm $j$ could have made a transition to $\phi_j$ during the first time step $t$).  Step~6 specifies how to continue playing arm $i$ in subsequent time steps if it is started at time $t$.  Note that $g_{\pi^{\rho_i},i,t}$ is guaranteed to be defined in this case, since $\pi_i\notin\{\rho_i,\phi_i\}$ and $\pi\in\bcS'$ implies $\pi_j\in\{\rho_j,\phi_j\}$ for all $j\neq i$.

This completes the specification of the solution to $\mathtt{(ExpLP')}$.  Every $y_{\pi,t}$ is set in Step~1, and every $z^a_{\pi,i,t}$ is set in either Step~4 or Step~6.

Using the definition of $f_{i,t}$, Step~4 guarantees that for $i\in[n]$, $a\in A$,
\begin{eqnarray*}
	\sum_{\pi\in\bcS':\pi_i=\rho_i}z^a_{\pi,i,t} & = & \sum_{\pi\in\bcS':\pi_i=\rho_i}\ty_{\pi,t}\cdot\frac{1}{2}\cdot\frac{x^a_{\rho_i,t}}{f_{i,t}} \\
	& = & \frac{x^a_{\rho_i,t}}{2}
\end{eqnarray*}

Meanwhile, Step~6 guarantees that for $i\in[n]$, $u\in\cS_i\setminus\{\rho_i\}$, $a\in A$,
\begin{eqnarray*}
	\sum_{\pi\in\bcS':\pi_i=u}z^a_{\pi,i,t+\depth(u)} & = & \sum_{\pi\in\bcS':\pi_i=u}g_{\pi,i,t}\cdot\frac{x^a_{u,t+\depth(u)}}{x_{\rho_i,t}} \\
	& = & \frac{x_{\rho_i,t}}{2}\cdot\frac{x^a_{u,t+\depth(u)}}{x_{\rho_i,t}} \\
	& = & \frac{x^a_{u,t+\depth(u)}}{2}
\end{eqnarray*}
We explain the second equality.  Since $u\neq\rho_i$ implies arm $i$ is the active arm in all of $\{\pi\in\bcS':\pi_i=u\}$, this set is equal to $\{\rho_1,\phi_1\}\times\cdots\times u\times\cdots\times\{\rho_n,\phi_n\}$.  Summing $g_{\pi,i,t}$ over all the possibilities for $\{\pi_j:j\neq i\}$ yields the total probability arm $i$ is started at time $t$.  This is equal to $\sum_{\pi\in\bcS':\pi_i=\rho_i}\sum_{a\in A}z^a_{\pi,i,t}$, which by the first calculation is equal to $\frac{x_{\rho_i,t}}{2}$.

The proof of (\ref{daggers}) is now complete.

\subsection{Proof of feasibility.}\label{feasibility}

We will inductively prove feasibility over iterations $t=1,\ldots,B$.  Suppose all of the variables $\{z^a_{\pi,i,t'},y_{\pi,t'}\}$ with $t'<t$ have already been set in a way that satisfies constraints (\ref{E2a'})-(\ref{E3'}), (\ref{E4a'})-(\ref{E5c'}).  Some of the variables $z^a_{\pi,i,t'}$ with $t'\ge t$ may have also been set in Step~6 of earlier iterations; if so, suppose they have already been proven to satisfy (\ref{E3'}).

On iteration $t$, we first compute in Step~1 $y_{\pi,t}$ for all $\pi\in\bcS'$; these are guaranteed to satisfy (\ref{E4a'})-(\ref{E5c'}) by definition.  To complete the induction, we need to show that (\ref{E2a'})-(\ref{E3'}) hold after setting the $z$-variables in Step~4, and furthermore, (\ref{E3'}) holds for any $z^a_{\pi,i,t'}$ (with $t'>t$) we set in Step~6.

We first prove the following lemma:
\begin{lemma}
	\label{tilde_y_lemma}
	Suppose $\pi\in\bcA_i$ for some $i\in[n]$.  Let $u=\pi_i$ (which is neither $\rho_i$ nor $\phi_i$).  Then $\sum_{a\in A}z^a_{\pi,i,t}\le y_{\pi,t}$, and furthermore if $u\in\cB$, then $z^\alpha_{\pi,i,t}=y_{\pi,t}$.
\end{lemma}
\proof{Proof.}
First suppose $\depth(u)=1$.  (\ref{E5b'}) says $y_{\pi,t} = \sum_{a:(\rho_i,a)\in\Par(u)}(\sum_{\pi'\in\bcP(\pi^{\rho_i})}z^a_{\pi',i,t-1})\cdot p^a_{\rho_i,u}$.  Every $\pi'$ in the sum has $\pi'_i=\rho_i$, so $z^a_{\pi',i,t-1}$ was set in Step~4 of iteration $t-1$ to $\ty_{\pi',t-1}\cdot\frac{1}{2}\cdot\frac{x^a_{\rho_i,t-1}}{f_{i,t-1}}$.  Substituting into (\ref{E5b'}), we get
\begin{eqnarray*}
	y_{\pi,t} & = & \sum_{a:(\rho_i,a)\in\Par(u)}\Big(\sum_{\pi'\in\bcP(\pi^{\rho_i})}\ty_{\pi',t-1}\cdot\frac{1}{2}\cdot\frac{x^a_{\rho_i,t-1}}{f_{i,t-1}}\Big)\cdot p^a_{\rho_i,u} \\
	& = & \Big(\sum_{\pi'\in\bcP(\pi^{\rho_i})}\ty_{\pi',t-1}\Big)\cdot\frac{1}{2f_{i,t-1}}\cdot\sum_{a:(\rho_i,a)\in\Par(u)}x^a_{\rho_i,t-1}\cdot p^a_{\rho_i,u}
\end{eqnarray*}
Meanwhile, for all $a\in A$, $z^a_{\pi,i,t}$ was set in Step~6 of iteration $t-1$ to $g_{\pi^{\rho_i},i,t-1}\cdot\frac{x^a_{u,t}}{x_{\rho_i,t-1}}$.  Hence
\begin{eqnarray*}
	z^a_{\pi,i,t} & = & g_{\pi^{\rho_i},i,t-1}\cdot\frac{x^a_{u,t}}{x_{\rho_i,t-1}} \\
	& = & \sum_{\pi'\in\bcP(\pi^{\rho_i})}z_{\pi',i,t-1}\cdot\frac{x^a_{u,t}}{x_{\rho_i,t-1}} \\
	& = & \sum_{\pi'\in\bcP(\pi^{\rho_i})}\Big(\sum_{b\in A}\ty_{\pi',t-1}\cdot\frac{1}{2}\cdot\frac{x^b_{\rho_i,t-1}}{f_{i,t-1}}\Big)\cdot\frac{x^a_{u,t}}{x_{\rho_i,t-1}} \\
	& = & \Big(\sum_{\pi'\in\bcP(\pi^{\rho_i})}\ty_{\pi',t-1}\Big)\cdot\frac{1}{2f_{i,t-1}}\cdot x^a_{u,t}
\end{eqnarray*}
where the second equality is by the definition of $g_{\pi^{\rho_i},i,t-1}$, and the third equality uses the fact that $z^b_{\pi',i,t-1}$ was set in Step~4 of iteration $t-1$.  To prove $\sum_{a\in A}z^a_{\pi,i,t}\le y_{\pi,t}$, it suffices to show $\sum_{a\in A}x^a_{u,t}\le\sum_{a:(\rho_i,a)\in\Par(u)}x^a_{\rho_i,t-1}\cdot p^a_{\rho_i,u}$.  This follows immediately from combining constraints (\ref{P2a}) and (\ref{P5b'}) of $\mathtt{(PolyLP')}$.  Furthermore, if $u\in\cB$, then we can use (\ref{P2b}) to get $z^{\alpha}_{\pi,i,t}=y_{\pi,t}$.

Now suppose $\depth(u)>1$.  (\ref{E5c'}) says $y_{\pi,t}=\sum_{(v,a)\in\Par(u)}z^a_{\pi^v,i,t-1}\cdot p^a_{v,u}$.  Since $v\neq\rho_i$, $z^a_{\pi^v,i,t-1}$ was set in Step~6 of iteration $t':=t-\depth(u)$ to $g_{\pi^{\rho_i},i,t'}\cdot\frac{x^a_{v,t-1}}{x_{\rho_i,t'}}$.  Substituting into (\ref{E5c'}), we get
\begin{eqnarray*}
	y_{\pi,t} & = & \sum_{(v,a)\in\Par(u)}g_{\pi^{\rho_i},i,t'}\cdot\frac{x^a_{v,t-1}}{x_{\rho_i,t'}}\cdot p^a_{v,u} \\
	& = & \frac{g_{\pi^{\rho_i},i,t'}}{x_{\rho_i,t'}}\sum_{(v,a)\in\Par(u)}x^a_{v,t-1}\cdot p^a_{v,u}
\end{eqnarray*}
Meanwhile, for all $a\in A$, $z^a_{\pi,i,t}$ was set in Step~6 of iteration $t'$ to $g_{\pi^{\rho_i},i,t'}\cdot\frac{x^a_{u,t}}{x_{\rho_i,t'}}$.  To prove $\sum_{a\in A}z^a_{\pi,i,t}\le y_{\pi,t}$, it suffices to show $\sum_{a\in A}x^a_{u,t}\le\sum_{(v,a)\in\Par(u)}x^a_{v,t-1}\cdot p^a_{v,u}$.  This is again obtained from (\ref{P2a}) and (\ref{P5b'}), and if $u\in\cB$, then we can use (\ref{P2b}) to get $z^{\alpha}_{\pi,i,t}=y_{\pi,t}$.
\Halmos\endproof

By the lemma, $\ty_{\pi,t}\ge0$ if $\pi\in\bcA_i$ for some $i\in[n]$.  On the other hand, $\ty_{\pi,t}\ge0$ is immediate from definition if $\pi\notin\bcA$.  Therefore, $\ty_{\pi,t}\ge0$ for all $\pi\in\bcS'$, and (\ref{E3'}) is satisfied by all the $z$-variables set in Step~4 or Step~6.  Furthermore, the lemma guarantees (\ref{E2b'}) for the $z^a_{\pi,i,t}$ with $\pi_i\in\cB$ set in previous iterations.

It remains to prove (\ref{E2a'}).  If $\pi\in\bcA_i$, then the LHS of $(\ref{E2a'})$ is
\begin{eqnarray*}
	\sum_{j\in I(\pi)}z_{\pi,j,t} & = & z_{\pi,i,t}+\sum_{j\in I(\pi)\setminus\{i\}}\ty_{\pi,t}\cdot\frac{1}{2}\cdot\frac{x_{\rho_j,t}}{f_{j,t}} \\
	& = & z_{\pi,i,t}+(y_{\pi,t}-z_{\pi,i,t})\cdot\sum_{j\in I(\pi)\setminus\{i\}}\frac{1}{2}\cdot\frac{x_{\rho_j,t}}{f_{j,t}}
\end{eqnarray*}
For the first equality, note that $z_{\pi,j,t}$ for $j\neq i$ is set in Step~4 of the current iteration, but $z_{\pi,i,t}$ has already been set in an earlier iteration.  The second equality is immediate from the definition of $\ty_{\pi,t}$.  Note that $y_{\pi,t}-z_{\pi,i,t}\ge0$, by Lemma~\ref{tilde_y_lemma}.  If we knew $\sum_{j\in I(\pi)\setminus\{i\}}\frac{1}{2}\cdot\frac{x_{\rho_j,t}}{f_{j,t}}\le1$, then we would have $\sum_{j\in I(\pi)}z_{\pi,j,t}\le z_{\pi,i,t}+(y_{\pi,t}-z_{\pi,i,t})(1)=y_{\pi,t}$, which is (\ref{E2a'}).

On the other hand, if $\pi\notin\bcA$, then the LHS of $(\ref{E2a'})$ is $y_{\pi,t}\cdot\sum_{j\in I(\pi)}\frac{1}{2}\cdot\frac{x_{\rho_j,t}}{f_{j,t}}$, where in this case all of the $z_{\pi,j,t}$ are set in Step~4 of the current iteration.  Similarly, if we knew $\sum_{j\in I(\pi)}\frac{1}{2}\cdot\frac{x_{\rho_j,t}}{f_{j,t}}\le1$, then we would have (\ref{E2a'}).

To complete the proof of feasibility, it suffices to show $\sum_{j=1}^n\frac{1}{2}\cdot\frac{x_{\rho_j,t}}{f_{j,t}}\le 1$ (note that $f_{j,t}$ is always non-negative, by its definition and Lemma~\ref{tilde_y_lemma}).  This is implied by the following lemma, which proves a simpler statement:
\begin{lemma}
	\label{enough_space_exact}
	$f_{i,t}\ge\sum_{j=1}^n\frac{x_{\rho_j,t}}{2}$ for all $i\in[n]$.
\end{lemma}

\proof{Proof.}
Fix some $i\in[n]$.  By the definitions in Step~2 and Step~3,
\begin{equation*}
f_{i,t}=\sum_{\pi\in\bcS':\pi_i=\rho_i}y_{\pi,t}-\sum_{j\neq i}\sum_{\pi\in\bcA_j:\pi_i=\rho_i}z_{\pi,j,t}
\end{equation*}

Let's start by bounding $\sum_{\pi\in\bcS':\pi_i=\rho_i}y_{\pi,t}$, the total probability arm $i$ is still on $\rho_i$ at the start of time $t$.  This is equal to $1-\sum_{t'<t}\sum_{\pi\in\bcS':\pi_i=\rho_i}z_{\pi,i,t'}$, where we subtract from $1$ the total probability arm $i$ was initiated before time $t$.  By (\ref{daggers}), $\sum_{\pi\in\bcS':\pi_i=\rho_i}z_{\pi,i,t'}=\frac{x_{\rho_i,t'}}{2}$ for all $t'<t$.  Furthermore,
\begin{equation}
\label{root_unity}
\sum_{t'<t}x_{\rho_i,t'}\le1
\end{equation}
from iteratively applying (\ref{P5a'}) to (\ref{P4a'}), and combining with (\ref{P2a})\footnote{Intuitively, we're arguing that solution to the LP relaxation still satisfies the total probability arm $i$ being played from its root node not exceeding unity.}.  Therefore, $\sum_{\pi\in\bcS':\pi_i=\rho_i}y_{\pi,t}\ge\frac{1}{2}$.

Now we bound the remaining term in the equation for $f_{i,t}$:
\begin{eqnarray*}
	\sum_{j\neq i}\sum_{\pi\in\bcA_j:\pi_i=\rho_i}z_{\pi,j,t} & = & \sum_{j\neq i}\sum_{v\in\cS_j\setminus\{\rho_j\}}\sum_{\pi:\pi_i=\rho_i,\pi_j=v}z_{\pi,j,t} \\
	& \le & \sum_{j\neq i}\sum_{v\in\cS_j\setminus\{\rho_j\}}\sum_{\pi:\pi_j=v}z_{\pi,j,t} \\
	& = & \frac{1}{2}\sum_{j\neq i}\sum_{v\in\cS_j\setminus\{\rho_j\}}x_{v,t} \\
	& \le & \frac{1}{2}\sum_{j=1}^n\sum_{v\in\cS_j\setminus\{\rho_j\}}x_{v,t} \\
	& \le & \frac{1}{2}\big(1-\sum_{j=1}^n x_{\rho_j,t}\big) \\
\end{eqnarray*}
The first inequality uses the non-negativity of $z_{\pi,j,t}$ in the inductively proven (\ref{E3'}), the second equality uses (\ref{daggers}), the second inequality uses the non-negativity of $x_{v,t}$ in (\ref{P3}), and the third inequality uses (\ref{P1}).

Combining the two terms, we get $f_{i,t}\ge\sum_{j=1}^n\frac{x_{\rho_j,t}}{2}$, as desired.
\Halmos\endproof

\subsection{Approximation algorithm via sampling.} Let's turn this exponential-sized solution $\{z^a_{\pi,i,t},y_{\pi,t}\}$ of $\mathtt{(ExpLP')}$ into a polynomial-time policy.  Now we will assume the $\{x^a_{u,t},s_{u,t}\}$ we are trying to imitate is an \emph{optimal} solution of $\mathtt{(PolyLP')}$.  Consider the following algorithm, which takes in as parameters a terminal time step $t\in[B]$, and probabilities $\lambda_{i,t'}$ for each $i\in[n],t'\le t$ (which for now should be considered to be $f_{i,t'}$ to aid in the comprehension of the algorithm):

\

\noindent $\subr$($t$, $\{\lambda_{i,t'}:i\in[n],t'\le t\}$)
\begin{itemize}
	\item Initialize $t'=1$, $\current=0$.
	\item While $t'\le t$:
	\begin{enumerate}
		\item If $\current=0$, then
		\begin{enumerate}
			\item For each arm $i$ that is on $\rho_i$, set $\current=i$ with probability $\frac{1}{2}\cdot\frac{x_{\rho_i,t'}}{\lambda_{i,t'}}$; if the sum of these probabilities exceeds $1$ (ie.\ this step is inadmissible), then terminate with no reward.
			\item If $\current$ was set in this way, leave $t'$ unchanged and enter the next if block.  Otherwise, leave $\current$ at $0$ but increment $t'$ by $1$.
		\end{enumerate}
		\item If $\current\neq 0$, then
		\begin{enumerate}
			\item Let $u$ denote the node arm $\current$ is on.  For each $a\in A$, play action $a$ on arm $\current$ with probability $\frac{x^a_{u,t'}}{x_{u,t'}}$.
			\item Suppose we transition onto node $v$ as a result of this play.  With probability $\frac{x_{v,t'+1}}{s_{v,t'+1}}$, leave $\current$ unchanged.  Otherwise, set $\current=0$.
			\item Increment $t'$ by $1$.
		\end{enumerate}
	\end{enumerate}
\end{itemize}

\

Define the following events and probabilities, which depend on the input passed into $\subr$:
\begin{itemize}
	\item For all $i\in[n],t'\le(t+1)$, let $\cA_{i,t'}$ be the event that at the beginning of time $t'$, $\current=0$ and arm $i$ is on $\rho_i$.  Let $\Free(i,t')=\Pr[\cA_{i,t'}]$.
	\item For all $i\in[n],t'\le t$, let $\started(i,t')$ be the probability that we play arm $i$ from $\rho_i$ at time $t'$.
	\item For all $u\in\cS,a\in A,t'\le t$, let $\played(u,a,t')$ be the probability that we play action $a$ on node $u$ at time $t'$.
\end{itemize}

It is easy to see that $\subr$ is an algorithmic specification of feasible solution $\{z^a_{\pi,i,t},y_{\pi,t}\}$ if we run it on input ($B$, $\{f_{i,t}:i\in[n],t\in[B]\}$).  Indeed, we would iteratively have for $t=1,\ldots,B$:
\begin{itemize}
	\item $\Free(i,t)=f_{i,t}$ for all $i\in[n]$
	\item $\started(i,t)=\Free(i,t)\cdot\frac{1}{2}\cdot\frac{x_{\rho_i,t}}{f_{i,t}}=\frac{x_{\rho_i,t}}{2}$ for all $i\in[n]$
	\item $\played(u,a,t)=\started(i,t-\depth(u))\cdot\frac{x^a_{u,t}}{x_{\rho_i,t-\depth(u)}}=\frac{x^a_{u,t}}{2}$ for all $u\in\cS,a\in A$
\end{itemize}
The final statement can be seen inductively:
\begin{eqnarray}
\played(u,a,t) & = & \sum_{(v,b)\in\Par(u)}\played(v,b,t-1)\cdot p^b_{v,u}\cdot\frac{x^a_{u,t}}{s_{u,t}} \nonumber \\
& = & \sum_{(v,b)\in\Par(u)}\big(\started(i,t-1-\depth(v))\cdot\frac{x^b_{v,t-1}}{x_{\rho_i,t-1-\depth(v)}}\big)\cdot p^b_{v,u}\cdot\frac{x^a_{u,t}}{s_{u,t}} \nonumber \\
& = & \started(i,t-\depth(u))\cdot\frac{x^a_{u,t}}{x_{\rho_i,t-\depth(u)}} \label{nonroot_nodes}
\end{eqnarray}
where the first equality is by Steps~2a-b, the second equality is by the induction hypothesis, and the final equality is by (\ref{P5b'}).

Therefore, we would have a $\frac{1}{2}$-approximation if we knew $\{f_{i,t}:i\in[n],t\in[B]\}$, but unfortunately computing $f_{i,t}$ requires summing exponentially many terms.  We can try to approximate it by sampling, but we can't even generate a sample from the binary distribution with probability $f_{i,t}$ since that requires knowing the exact values of $f_{i,t'}$ for $t'<t$.  So we give up trying to approximate $f_{i,t}$, and instead iteratively approximate the values of $\Free(i,t)$ when $\subr$ is ran on previously approximated $\Free(i,t)$ values.

Fix some small $\varepsilon,\delta>0$ that will be determined later.  Let $\med=\frac{3\ln(2\delta^{-1})}{\varepsilon^2}$.  Change $\subr$ so that the probabilities in Step~1a are multiplied by $(1-\varepsilon)^2$ (and change the definitions of $\cA_{i,t'},\Free,\started,\played$ accordingly).

\

\noindent $\mathsf{Sampling\ Algorithm}$
\begin{itemize}
	\item Initialize $\Freeemp(i,1)=1$ for all $i\in[n]$.
	\item For $t=2,\ldots,B$:
	\begin{enumerate}
		\item Run $\subr$($t-1$, $\{\Freeemp(i,t'):i\in[n],t'<t\}$) a total of $M=\frac{8|\cS|B}{\varepsilon}\cdot\med$ times.  For all $i\in[n]$, let $C_{i,t}$ count the number of times event $\cA_{i,t}$ occurred.
		\item For each $i\in[n]$, if $C_{i,t}>\med$, set $\Freeemp(i,t)=\frac{C_{i,t}}{M}$; otherwise set $\Freeemp(i,t)=\sum_{j=1}^n\frac{x_{\rho_j,t}}{2}$.
	\end{enumerate}
\end{itemize}

\

Consider iteration $t$ of $\mathsf{Sampling\ Algorithm}$.  $\{\Freeemp(i,t'):i\in[n],t'<t\}$ have already been finalized, and we are sampling event $\cA_{i,t}$ when (the $\varepsilon$-modified) $\subr$ is ran on those finalized approximations to record values for $\{\Freeemp(i,t):i\in[n]\}$.  For all $i\in[n]$, if $C_{i,t}>\med$, then the probability of $\frac{C_{i,t}}{M}$ lying in $\big((1-\varepsilon)\cdot\Free(i,t),(1+\varepsilon)\cdot\Free(i,t)\big)$ is at least\footnote{This is because $\Freeemp(i,t)$ is an average over $M\ge\frac{8}{\varepsilon}\cdot\med$ runs, which is enough samples to guarantee this probability; see Motwani and Raghavan \cite{MR10}.} $1-\delta$.  As far as when we have $C_{i,t}>\med$, note that if $\Free(i,t)>\frac{\varepsilon}{4|\cS|B}$, then $\mathbb{E}[C_{i,t}]>2\med$, so the Chernoff bound says $\Pr[C_{i,t}\le\med]=O(\delta^{\frac{1}{\varepsilon^2}})=O(\delta)$.  We have discussed two $O(\delta)$ probability events in this paragraph of sampling/Chernoff yielding an unlikely and undesired result; call these events \emph{failures}.

By the union bound, the probability of having any failure over iterations $t=2,\ldots,B$ is at most $2(B-1)n(\delta+O(\delta))=O(Bn\delta)$.  Assuming no failures, we will inductively prove
\begin{equation}
\label{sampling_bound}
\frac{(1-\varepsilon)^2}{1+\varepsilon}\cdot\frac{x_{\rho_i,t}}{2}\le\started(i,t)\le\max\big\{(1-\varepsilon)\cdot\frac{x_{\rho_i,t}}{2},\frac{\varepsilon}{4|\cS|B}\big\}
\end{equation}
for all $i\in[n]$.  This is clear when $t=1$ since $\started(i,1)=\frac{x_{\rho_i,1}}{2}$ exactly for all $i\in[n]$.

Now suppose $t\ge2$.  We will first prove a lemma on the true probabilities $\Free(i,t)$, which is the ``approximate'' version of Lemma~\ref{enough_space_exact}:
\begin{lemma}
	\label{enough_space}
	Suppose $\subr$ is ran on input $(t-1,\{\Freeemp(i,t'):i\in[n],t'<t\})$ and there were no failures while obtaining the sample average approximations $\Freeemp(i,t')$.  Then for all $i\in[n]$, $\Free(i,t)\ge\frac{1}{2}\sum_{j=1}^nx_{\rho_j,t}$.
\end{lemma}

\proof{Proof.}
We know that event $\cA_{i,t}$ will occur if at time $t$, arm $i$ has not yet been started, and no other arm is active.  By the union bound, $1-\Free(i,t)\le\sum_{t'<t}\started(i,t')+\sum_{j=1}^n\sum_{u\in\cS_j\setminus\{\rho_j\}}\sum_{a\in A}\played(u,a,t)$.  Assuming (\ref{sampling_bound}) holds, we can bound
\begin{eqnarray*}
	\sum_{t'<t}\started(i,t') & \le & \sum_{t'<t}\big((1-\varepsilon)\cdot\frac{x_{\rho_i,t'}}{2}+\frac{\varepsilon}{4|\cS|B}\big) \\
	& \le & \frac{1-\varepsilon}{2}\sum_{t'<t}x_{\rho_i,t'}+\frac{\varepsilon}{4|\cS|} \\
	& \le & \frac{1-\varepsilon}{2}+\frac{\varepsilon}{4}
\end{eqnarray*}
where the final inequality uses (\ref{root_unity}).  Similarly, assuming (\ref{sampling_bound}) holds, we can bound
\begin{eqnarray*}
	\sum_{j=1}^n\sum_{u\in\cS_j\setminus\{\rho_j\}}\sum_{a\in A}\played(u,a,t) & = & \sum_{j=1}^n\sum_{u\in\cS_j\setminus\{\rho_j\}}\started(i,t-\depth(u))\cdot\frac{x_{u,t}}{x_{\rho_i,t-\depth(u)}} \\
	& \le & \sum_{j=1}^n\sum_{u\in\cS_j\setminus\{\rho_j\}}\big((1-\varepsilon)\cdot\frac{x_{\rho_i,t-\depth(u)}}{2}+\frac{\varepsilon}{4|\cS|B}\big)\cdot\frac{x_{u,t}}{x_{\rho_i,t-\depth(u)}} \\
	& \le & \frac{1-\varepsilon}{2}\sum_{j=1}^n\sum_{u\in\cS_j\setminus\{\rho_j\}}x_{u,t}+\frac{\varepsilon}{4B} \\
	& \le & \frac{1}{2}\Big(1-\sum_{j=1}^nx_{\rho_j,t}\Big)+\frac{\varepsilon}{4} \\
\end{eqnarray*}
where the equality uses (\ref{nonroot_nodes}), the second inequality uses the fact that $x_{u,t}\le x_{\rho_i,t-\depth(u)}$, and the final inequality uses (\ref{P1}).  Combining these bounds completes the proof of the lemma.
\Halmos\endproof

By the description in Step~1a of $\subr$, for all $i\in[n]$, we have
\begin{equation}
\label{desc_of_alg}
\started(i,t)=\Free(i,t)\cdot\frac{1}{2}\cdot\frac{x_{\rho_i,t}}{\Freeemp(i,t)}\cdot(1-\varepsilon)^2
\end{equation}
If $C_{i,t}>\med$, then $\Freeemp(i,t)$ will be set to $\frac{C_{i,t}}{M}$, and furthermore no failures implies $(1-\varepsilon)\cdot\Free(i,t)\le\frac{C_{i,t}}{M}\le(1+\varepsilon)\cdot\Free(i,t)$.  Substituting into (\ref{desc_of_alg}), we get $\frac{(1-\varepsilon)^2}{1+\varepsilon}\cdot\frac{x_{\rho_i,t}}{2}\le\started(i,t)\le(1-\varepsilon)\cdot\frac{x_{\rho_i,t}}{2}$ which implies (\ref{sampling_bound}).  On the other hand, if $C_{i,t}\le\med$, then $\Freeemp(i,t)$ will be set to $\sum_{j=1}^n\frac{x_{\rho_j,t}}{2}$, and assuming no failures it must have been the case that $\Free(i,t)\le\frac{\varepsilon}{4|\cS|B}$.  Substituting into (\ref{desc_of_alg}), we get $\started(i,t)\le\frac{\varepsilon}{4|\cS|B}\cdot\frac{1}{2}\cdot\frac{x_{\rho_i,t}}{\sum_{j=1}^nx_{\rho_j,t}}\cdot(1-\varepsilon)^2\le\frac{\varepsilon}{4|\cS|B}$ which implies the upper bound in (\ref{sampling_bound}).  For the lower bound, Lemma~\ref{enough_space} says $\Free(i,t)\ge\Freeemp(i,t)$, so $\started(i,t)\ge(1-\varepsilon)^2\cdot\frac{x_{\rho_i,t}}{2}\ge\frac{(1-\varepsilon)^2}{1+\varepsilon}\cdot\frac{x_{\rho_i,t}}{2}$.

This completes the induction for (\ref{sampling_bound}).  The final thing to check is that with these new parameters $\{\Freeemp(i,t):i\in[n]\}$, the sum of the probabilities in Step~1a of $\subr$ does not exceed $1$.  $\Freeemp(i,t)$ will either get set to $\sum_{j=1}^n\frac{x_{\rho_j,t}}{2}$, or be at least $(1-\varepsilon)\cdot\Free(i,t)$, which is at least $(1-\varepsilon)\cdot\sum_{j=1}^n\frac{x_{\rho_j,t}}{2}$ by Lemma~\ref{enough_space}.  In either case, $\Freeemp(i,t)\ge(1-\varepsilon)\cdot\sum_{j=1}^n\frac{x_{\rho_j,t}}{2}$ for all $i\in[n]$, so the desired sum in Step~1a is at most $\frac{1}{1-\varepsilon}\cdot(1-\varepsilon)^2\le1$.

We have an algorithm that fails with probability $O(Bn\delta)$, and when it doesn't fail, $\started(i,t)\ge\frac{(1-\varepsilon)^2}{1+\varepsilon}\cdot\frac{x_{\rho_i,t}}{2}$ for all $i\in[n],t\in[B]$, which in conjunction with (\ref{nonroot_nodes}) shows that we obtain expected reward at least $\frac{(1-\varepsilon)^2}{1+\varepsilon}\cdot\frac{1}{2}\cdot\OPT_{\mathtt{PolyLP'}}$.  Recall from Lemma~\ref{LPprimeproj} that $\OPT_{\mathtt{PolyLP'}}\ge\OPT_{\mathtt{ExpLP'}}$.  Treating a failed run as a run with $0$ reward, we can set $\delta=\Theta(\frac{\varepsilon}{Bn})$ to get a $(\frac{1}{2}-\varepsilon)$-approximation.  Finally, note that the runtime of this approximation algorithm is polynomial in the input, $\frac{1}{\varepsilon}$, and $\ln(\frac{1}{\delta})$, completing the proof of Theorem~\ref{mainresult1}.

\section{Proof of Theorem~\ref{mainresult2}.}\label{proof_section2}

In this section we prove Theorem~\ref{mainresult2}, and also show how to modify the proof to prove Theorem~\ref{mainresult3}.

\subsection{Description of algorithm.}\label{descr_priority_alg}

The algorithm maintains a priority index\footnote{For mathematical convenience, lower priority indices will mean higher priorities.} for each arm, telling us when it will try to play that arm again.  Formally, if an arm is on node $u$, we say the arm is in \emph{status} $(u,a,t)$ to represent that the algorithm will next try to play action $a\in A$ on the arm at time $t\in[B]$.  We allow $t=\infty$ to indicate that the algorithm will never try to play the arm again; in this case we omit the action parameter.

Fix an optimal solution $\{x^a_{u,t},s_{u,t}\}$ to $\mathtt{(PolyLP)}$.  We initialize each arm $i$ to status $(\rho_i,a,t)$ with probability $\frac{x^a_{\rho_i,t}}{C}$, for all $a\in A$ and $t\in[B]$, where $C>0$ is some constant which we optimize later.  With probability $1-\sum_{a\in A}\sum_{t=1}^B\frac{x^a_{\rho_i,t}}{C}$, the arm is initialized to status $(\rho_i,\infty)$ and never touched; note that this probability is at least $1-\frac{1}{C}$.

The statuses also evolve according to the solution of the LP relaxation.  If we play an arm and it transitions to node $u$, we need to decide what status $(u,a,t)$ to put that arm in.  The evolution of statuses is independent of other arms.  For all $i\in[n]$, $u\in\cS_i\setminus\{\rho_i\}$, $a\in A$, $t\in[B]$, $(v,b)\in\Par(u)$, and $t'<t$, define $q_{v,b,t',u,a,t}$ to be the probability we put arm $i$ into status $(u,a,t)$, conditioned on arriving at node $u$ after playing action $b$ on node $v$ at time $t'$.  The following lemma tells us that such $q$'s always exist, and that we can find them in polynomial time:
\begin{lemma}\label{flowdec}
	Suppose we are given the $x$'s of a feasible solution to $\mathtt{(PolyLP)}$.  Then we can find $\{q_{v,b,t',u,a,t}:u\in\cS\setminus\{\rho_1,\ldots,\rho_n\},a\in A,t\in[B],(v,b)\in\Par(u),t'<t\}$ in polynomial time such that
	\begin{subequations}
		\begin{align}
		\sum_{a\in A}\sum_{t>t'}q_{v,b,t',u,a,t} & \le1 && u\in\cS\setminus\{\rho_1,\ldots,\rho_n\},\ (v,b)\in\Par(u),\ t'\in[B-1] \label{flowdec_eqn1} \\
		\sum_{(v,b)\in\Par(u)}\sum_{t'<t}x^b_{v,t'}\cdot p^b_{v,u}\cdot q_{v,b,t',u,a,t} & =x^a_{u,t} && u\in\cS\setminus\{\rho_1,\ldots,\rho_n\},\ a\in A,\ t\in\{2,\ldots,B\} \label{flowdec_eqn2}
		\end{align}
	\end{subequations}
	Furthermore, if $u\in\cB$, then we can strengthen (\ref{flowdec_eqn1}) to $q_{v,b,t',u,\alpha,t'+1}=1$.
\end{lemma}
(\ref{flowdec_eqn1}) ensures that the probabilities telling us what to do, when we arrive at node $u$ after playing action $b$ on node $v$ at time $t'$, are well-defined; the case where $u$ is a bridge node will be needed to prove Theorem~\ref{mainresult3}.  For all $i\in[n]$, $u\in\cS_i\setminus\{\rho_i\}$, $(v,b)\in\Par(u)$, and $t'\in[B-1]$, define $q_{v,b,t',u,\infty}=1-\sum_{a\in A}\sum_{t>t'}q_{v,b,t',u,a,t}$, the probability we abandon arm $i$ after making the transition to $u$.  (\ref{flowdec_eqn2}) will be used in the analysis to provide a guarantee on each status $(u,a,t)$ ever being reached.

Lemma~\ref{flowdec} is the replacement for \emph{convex decomposition} from \cite{GKMR11}, and its proof is deferred to Appendix~C.  Having defined the $q$'s, the overall algorithm can now be described in two steps:
\begin{enumerate}
	\item While there exists an arm with priority not $\infty$, play an arm with the lowest priority (breaking ties arbitrarily) until it arrives at a status $(u,a,t)$ such that $t\ge 2\cdot\depth(u)$ ($t=\infty$ would suffice).
	\item Repeat until all arms have priority $\infty$.
\end{enumerate}
Of course, we are constrained by a budget of $B$ time steps, but it will simplify the analysis to assume our algorithm finishes all the arms and collects reward only for plays up to time $B$.  Under this assumption, the statuses an arm goes through is independent of the outcomes on all other arms; the inter-dependence only affects the order in which arms are played (and thus which nodes obtain reward).

Also, note that this is only a valid algorithm because Theorem~\ref{mainresult2} assumes all processing times are $1$, so there are no bridge nodes.  If there were bridge nodes, then we may not be allowed to switch to an arm with lowest priority index, being forced to play the arm on a bridge node.

\subsection{Analysis of algorithm.} For all $i\in[n]$, $u\in\cS_i$, $a\in A$, $t\in[B]$, let $\timesf(u,a,t)$ be the random variable for the time step at which our algorithm plays arm $i$ from status $(u,a,t)$, with $\timesf(u,a,t)=\infty$ if arm $i$ never gets in status $(u,a,t)$.  Then $\Pr[\timesf(\rho_i,a,t)<\infty]=\frac{x^a_{\rho_i,t}}{C}$ for all $i\in[n]$, $a\in A$, $t\in[B]$.  If $u$ is a non-root node, then we can induct on $\depth(u)$ to prove for all $a\in A$, $t\in[B]$ that
\begin{eqnarray}
\Pr[\timesf(u,a,t)<\infty] & = & \sum_{(v,b)\in\Par(u)}\sum_{t'<t}\Pr[\timesf(v,b,t')<\infty]\cdot p^b_{v,u}\cdot q_{v,b,t',u,a,t} \nonumber \\
& = & \sum_{(v,b)\in\Par(u)}\sum_{t'<t}\frac{x^b_{v,t'}}{C}\cdot p^b_{v,u}\cdot q_{v,b,t',u,a,t} \nonumber \\
& = & \frac{x^a_{u,t}}{C} \label{constant_factor}
\end{eqnarray}
where the final equality follows from Lemma~\ref{flowdec}.

For an event $\cA$, let $\bb_{\cA}$ be the indicator random variable for $\cA$.  The expected reward obtained by our algorithm is
\begin{eqnarray*}
	& & \mathbb{E}\Big[\sum_{u\in\cS}\sum_{a\in A}r^a_u\sum_{t=1}^B\bb_{\{\timesf(u,a,t)\le B\}}\Big] \\
	& = & \sum_{u\in\cS}\sum_{a\in A}r^a_u\sum_{t=1}^B \mathbb{E}[\bb_{\{\timesf(u,a,t)\le B\}}\ |\ \timesf(u,a,t)<\infty]\cdot\Pr[\timesf(u,a,t)<\infty] \\
	& = & \sum_{u\in\cS}\sum_{a\in A}r^a_u\sum_{t=1}^B \Pr[\timesf(u,a,t)\le B\ |\ \timesf(u,a,t)<\infty]\cdot\frac{x^a_{u,t}}{C}
\end{eqnarray*}

For the remainder of this subsection, we will set $C=3$ and prove for an arbitrary $i\in[n]$, $u\in\cS_i$, $a\in A$, $t\in[B]$ that $\Pr[\timesf(u,a,t)\le B\ |\ \timesf(u,a,t)<\infty]\ge\frac{4}{9}$.  It suffices to prove that $\Pr[\timesf(u,a,t)\le t\ |\ \timesf(u,a,t)<\infty]\ge\frac{4}{9}$, since $t\le B$.

\underline{Case 1}. Suppose $t\ge 2\cdot\depth(u)$.  We prove that conditioned on the event $\{\timesf(u,a,t)<\infty\}$, $\{\timesf(u,a,t)>t\}$ occurs with probability at most $\frac{5}{9}$.

Note that every node $v$ can have at most one $b,t'$ such that $\timesf(v,b,t')<\infty$; let $\timesf(v)$ denote this quantity (and be $\infty$ if $\timesf(v,b,t')=\infty$ for all $b\in A$, $t'\in[B]$).  The nodes $v$ that are played before $u$ are those with $\timesf(v)<\timesf(u,a,t)$.  Since our algorithm plays a node at every time step, $\timesf(u,a,t)>t$ if and only if there are $t$ or more nodes $v\neq u$ such that $\timesf(v)<\timesf(u,a,t)$.  But this is equivalent to there being exactly $t$ nodes $v\neq u$ such that $\timesf(v)<\timesf(u,a,t)$ and $\timesf(v)\le t$.  The $\depth(u)$ ancestors of $u$ are guaranteed to satisfy this.

Hence the event $\{\timesf(u,a,t)>t\}$ is equivalent to $\{\depth(u)+\sum_{v\in\cS\setminus\cS_i}\bb_{\{\timesf(v)<\timesf(u,a,t)\}}\cdot\bb_{\{\timesf(v)\le t\}}=t\}$.  But $t\ge 2\cdot\depth(u)$, so this implies $\{\sum_{v\in\cS\setminus\cS_i}\bb_{\{\timesf(v)<\timesf(u,a,t)\}}\cdot\bb_{\{\timesf(v)\le t\}}\ge\frac{t}{2}\}\implies\{\sum_{v\in\cS\setminus\cS_i}\bb_{\{\timesf(v)<\timesf(u,a,t)\}}\ge\frac{t}{2}\}$.  Now, whether the sum is at least $\frac{t}{2}$ is unchanged if we exclude all $v$ such that $\depth(v)\ge\frac{t}{2}$.  Indeed, if any such $v$ satisfies $\timesf(v)<\timesf(u,a,t)$, then all of its ancestors also do, and its first $\lceil\frac{t}{2}\rceil$ ancestors ensure that the sum, without any nodes of depth at least $\frac{t}{2}$, is at least $\frac{t}{2}$.  Thus, the last event is equivalent to
\begin{equation}\label{dagger3}
\big\{\sum_{v\in\cS\setminus\cS_i:\depth(v)<\frac{t}{2}}\bb_{\{\timesf(v)<\timesf(u,a,t)\}}\ge\frac{t}{2}\big\}
\end{equation}

Suppose $\timesf(v)=\timesf(v,b,t')$ for some $b\in A$ and $t'\in[B]$.  We would like to argue that in order for both $\timesf(v)<\timesf(u,a,t)$ and $\depth(v)<\frac{t}{2}$ to hold, it must be the case that $t'\le t$.  Suppose to the contrary that $t'>t$.  If $t'\ge 2\cdot\depth(v)$, then the algorithm can only play $(v,b,t')$ once $t'$ becomes the lowest priority index, which must happen after $(u,a,t)$ becomes the lowest priority index, hence $\timesf(v,b,t')<\timesf(u,a,t)$ is impossible.  Otherwise, if $t'<2\cdot\depth(v)$, then $\depth(v)>\frac{t'}{2}>\frac{t}{2}$, violating $\depth(v)<\frac{t}{2}$.  Thus indeed $t'\le t$ and
\begin{eqnarray*}
	(\ref{dagger3}) & \iff & \big\{\sum_{v\in\cS\setminus\cS_i:\depth(v)<\frac{t}{2}}\sum_{b\in A}\sum_{t'=1}^t\bb_{\{\timesf(v,b,t')<\timesf(u,a,t)\}}\ge\frac{t}{2}\big\} \\
	& \implies & \big\{\sum_{v\in\cS\setminus\cS_i}\sum_{b\in A}\sum_{t'=1}^t\bb_{\{\timesf(v,b,t')<\infty\}}\ge\frac{t}{2}\big\}
\end{eqnarray*}

We establish that the probability of interest $\Pr[\timesf(u,a,t)>t\ |\ \timesf(u,a,t)<\infty]$ is at most
\begin{eqnarray*}
	& & \Pr\Big[\sum_{v\in\cS\setminus\cS_i}\sum_{b\in A}\sum_{t'=1}^t\bb_{\{\timesf(v,b,t')<\infty\}}\ge\frac{t}{2}\ |\ \timesf(u,a,t)<\infty\Big] \\
	& = & \Pr\Big[\sum_{j\neq i}\sum_{v\in\cS_j}\sum_{b\in A}\sum_{t'=1}^t\bb_{\{\timesf(v,b,t')<\infty\}}\ge\frac{t}{2}\Big]
\end{eqnarray*}
where we remove the conditioning due to independence between arms.  Now, let
\begin{equation*}
Y_j=\min\big\{\sum_{v\in\cS_j}\sum_{b\in A}\sum_{t'=1}^t\bb_{\{\timesf(v,b,t')<\infty\}},\frac{t}{2}\big\}
\end{equation*}
for all $j\neq i$.  The previous probability is equal to $\Pr[\sum_{j\neq i}Y_j\ge\frac{t}{2}]$.  Note that
\begin{eqnarray*}
	\mathbb{E}\Big[\sum_{j\neq i}Y_j\Big] & \le & \sum_{j\neq i}\sum_{v\in\cS_j}\sum_{b\in A}\sum_{t'=1}^t\Pr[\timesf(v,b,t')<\infty] \\
	& \le & \sum_{t'=1}^t\sum_{v\in\cS}\sum_{b\in A}\frac{x^b_{v,t'}}{3} \\
	& \le & \frac{t}{3}
\end{eqnarray*}
where the second inequality uses (\ref{constant_factor}), and the final inequality uses (\ref{P1}).  We can do better than the Markov bound on $\Pr[\sum_{j\neq i}Y_j\ge\frac{t}{2}]$ because the random variables $\{Y_j\}_{j\neq i}$ are independent.  Furthermore, each $Y_j$ is non-zero with probability at most $\frac{1}{3}$ (arm $j$ is never touched with probability at least $\frac{2}{3}$), so since is at most $\frac{t}{2}$ when it is non-zero, $\mathbb{E}[Y_j]\le\frac{t}{6}$ for all $j\neq i$.  We now invoke the following lemma:
\begin{lemma}
	\label{grind}
	Let $t>0$ be arbitrary and $Y_1,\ldots,Y_m$ be independent non-negative random variables with individual expectations at most $\frac{t}{6}$ and sum of expectations at most $\frac{t}{3}$.  Then $\Pr[\sum_{j=1}^m Y_j\ge\frac{t}{2}]$ is maximized when only two random variables are non-zero, each taking value $\frac{t}{2}$ with probability $\frac{1}{3}$ (and value $0$ otherwise).  Therefore, $\Pr[\sum_{j=1}^m Y_j\ge\frac{t}{2}]\le 1-(1-\frac{1}{3})^2=\frac{5}{9}$.
\end{lemma}

This lemma would complete the proof that $\Pr[\timesf(u,a,t)>t\ |\ \timesf(u,a,t)<\infty]\le\frac{5}{9}$ under Case 1, where $t\ge2\cdot\depth(u)$.  We defer the proof of Lemma~\ref{grind} to Appendix~C.  It uses the conjecture of Samuels from \cite{Sam66} for $n=3$; the conjecture has been proven for $n\le 4$ in \cite{Sam68}.  The proof also uses a technical lemma of Bansal et al.\ from \cite{BGL+12}.

\underline{Case 2}. Suppose $t<2\cdot\depth(u)$.  Then $\depth(u)$ must be at least $1$, so conditioned on $\timesf(u,a,t)<\infty$, there must be some $(v,b)\in\Par(u)$ and $t'<t$ such that $\timesf(v,b,t')<\infty$.  Furthermore, the algorithm will play status $(u,a,t)$ at time step $\timesf(v,b,t')+1$, so $\timesf(u,a,t)\le t$ will hold so long as $\timesf(v,b,t')\le t'$, since $t'<t$.  Thus $\Pr[\timesf(u,a,t)\le t\ |\ \timesf(u,a,t)<\infty]\ge\Pr[\timesf(v,b,t')\le t\ |\ \timesf(v,b,t')<\infty]$.  We can iterate this argument until the the problem reduces to Case 1.  This completes the proof that $\Pr[\timesf(u,a,t)\le t\ |\ \timesf(u,a,t)<\infty]\ge\frac{4}{9}$ under Case 2.

\

Therefore, the expected reward obtained by our algorithm is at least $\sum_{u\in\cS}\sum_{a\in A}r^a_u\sum_{t=1}^B(1-\frac{5}{9})\frac{x^a_{u,t}}{3}$, which is the same as $\frac{4}{27}\OPT_{\mathtt{PolyLP}}$, completing the proof of Theorem~\ref{mainresult2}.

\subsection{Proof of Theorem~\ref{mainresult3}.}\label{proof_section2.5}

In this subsection we show how to modify the algorithm and analysis when there are multi-period actions, to prove Theorem~\ref{mainresult3}.  As mentioned in Subsection~\ref{descr_priority_alg}, we must modify Step~1 of the algorithm when there are bridge nodes.  If we arrive at a status $(u,a,t)$ such that $t\ge2\cdot\depth(u)$ but $u\in\cB$ (and $a=\alpha$), we are forced to immediately play the same arm again, instead of switching to another arm with a lower priority index.

The overall framework of the analysis still holds, except now the bound is optimized when we set $C=6$.  Our goal is to prove for an arbitrary $i\in[n]$, $u\in\cS_i$, $a\in A$, $t\in[B]$ that $\Pr[\timesf(u,a,t)>t\ |\ \timesf(u,a,t)<\infty]\le\frac{1}{2}$.  We still have that event $\{\timesf(u,a,t)>t\}$ implies (\ref{dagger3}).  Suppose for an arbitrary $v\in\cS\setminus\cS_i$ that $\depth(v)<\frac{t}{2}$ and $\timesf(v)<\timesf(u,a,t)$, where $\timesf(v)=\timesf(v,b,t')$.  We can no longer argue that $t'\le t$, but we would like to argue that $t'\le\frac{3t}{2}$.  Suppose to the contrary that $t'>\frac{3t}{2}$.

Then $t'>3\cdot\depth(v)$, so $t'\ge2\cdot\depth(v)$ ie.\ we would check priorities before playing $(v,b,t')$.  However, if $v\in\cB$, then it could be the case that $\timesf(v,b,t')<\timesf(u,a,t)$ even though $t'>t$.  If so, consider $w$, the youngest (largest depth) ancestor of $v$ that isn't a bridge node.  Suppose $\timesf(w)=\timesf(w,b',t'')$; it must be the case that $t''\le t$.  By the final statement of Lemma~\ref{flowdec}, the $\depth(v)-\depth(w)$ immediate descendents of $w$, which are bridge nodes, must have priority indices $t''+1,\ldots,t''+\depth(v)-\depth(w)$, respectively.  The youngest of these descendents is $v$, hence $t'=t''+\depth(v)-\depth(w)$.  But $t''\le t$ and $\depth(v)<\frac{t}{2}$, so $t'<\frac{3t}{2}$, causing a contradiction.

Therefore $t'\le\frac{3t}{2}$.  The bound on $\mathbb{E}[\sum_{j\neq i}Y_j]$ changes to
\begin{eqnarray*}
	\mathbb{E}\Big[\sum_{j\neq i}Y_j\Big] & \le & \sum_{j\neq i}\sum_{v\in\cS_j}\sum_{b\in A}\sum_{t'=1}^{\frac{3t}{2}}\Pr[\timesf(v,b,t')<\infty] \\
	& \le & \sum_{t'=1}^{\frac{3t}{2}}\sum_{v\in\cS}\sum_{b\in A}\frac{x^b_{v,t'}}{6} \\
	& \le & \frac{t}{4}
\end{eqnarray*}

Thus $\Pr[\timesf(u,a,t)>t\ |\ \timesf(u,a,t)<\infty]\le\Pr[\sum_{j\neq i}Y_j\ge\frac{t}{2}]\le\frac{1}{2}$ where the final inequality is Markov's inequality.  Note that we cannot use the stronger Samuels' conjecture here because we would need it for $n=5$, which is unproven; if we could, then we could get a better approximation factor (and we would re-optimize $C$).

The rest of the analysis, including Case~2, is the same as before.  Therefore, the expected reward obtained by our algorithm is at least $\sum_{u\in\cS}\sum_{a\in A}r^a_u\sum_{t=1}^B(1-\frac{1}{2})\frac{x^a_{u,t}}{6}$, which is the same as $\frac{1}{12}\OPT_{\mathtt{PolyLP}}$, completing the proof of Theorem~\ref{mainresult3}.

\section{Conclusion and open questions.}

In this paper, we presented a $(\frac{1}{2}-\varepsilon)$-approximation for the fully general \emph{MAB superprocess with multi-period actions---no preemption} problem by following a scaled copy of an optimal solution to the LP relaxation, and this is tight.  However, when preemption is allowed, we were only able to obtain a $\frac{1}{12}$-approximation, using the solution to the LP relaxation mainly for generating priorities, and resorting to weak Markov-type bounds in the analysis.  It seems difficult to follow a scaled copy of a solution to the LP relaxation when preemption is allowed, because arms can be paused and restarted.  We do conjecture that our bound of $(\frac{1}{2}+\varepsilon)$ on the gap of the LP is correct in this case and that it is possible to obtain a $(\frac{1}{2}-\varepsilon)$-approximation, but this remains an open problem.  Also, we have not explored how our techniques apply to certain extensions of the multi-armed bandit problem (switching costs, simultaneous plays, delayed feedback, contextual information, etc.).

\textbf{Acknowledgments.} The author's research was partly supported by a Natural Sciences and Engineering Research Council of Canada (NSERC) Postgraduate Scholarship - Doctoral.  The author would like to thank Michel Goemans for helpful discussions and suggestions on the presentation of the results.  A preliminary version of this paper appeared in the ACM-SIAM Symposium on Discrete Algorithms (SODA), 2014, and useful remarks from several anonymous referees there have also improved the presentation of the results.

\bibliographystyle{amsalpha}
\bibliography{bibliography}

\section{Appendix A: Example showing preemption is necessary for uncorrelated SK.}\label{appxA}

Consider the following example: there are $n=3$ items, $I_1,I_2,I_3$.  $I_1$ instantiates to size $6$ with probability $\frac{1}{2}$, and size $1$ with probability $\frac{1}{2}$.  $I_2$ deterministically instantiates to size $9$.  $I_3$ instantiates to size $8$ with probability $\frac{1}{2}$, and size $4$ with probability $\frac{1}{2}$.  $I_1,I_2,I_3$, if successfully inserted, return rewards of $4,9,8$, respectively.  We have a knapsack of size $10$.

We describe the optimal preempting policy.  First we insert $I_1$.  After $1$ unit of time, if $I_1$ completes, we go on to insert $I_2$, which will deterministically fit.  If $I_1$ doesn't complete, we set it aside and insert $I_3$ to completion.  If it instantiates to size $8$, then we cannot get any more reward from other items.  If it instantiates to size $4$, then we can go back and finish inserting the remaining $5$ units of $I_1$.  The expected reward of this policy is $\frac{1}{2}(4+9)+\frac{1}{2}(\frac{1}{2}8+\frac{1}{2}(8+4))=11.5$.

Now we enumerate the policies that can only cancel but not preempt.  If we first insert $I_2$, then the best we can do is try to insert $I_1$ afterward, getting a total expected reward of $11$.  Note that any policy never fitting $I_2$ can obtain reward at most $11$, since with probability $\frac{1}{4}$ we cannot fit both $I_1$ and $I_3$.  This rules out policies that start with $I_3$, which has no chance of fitting alongside $I_2$.  Remaining are the policies that first insert $I_1$.  If it doesn't complete after $1$ unit of time, then we can either settle for the $9$ reward of $I_2$, or finish processing $I_1$ with the hope of finishing $I_3$ afterward.  However, in this case, $I_3$ only finishes half the time, so we earn more expected reward by settling for $I_2$.  Therefore, the best we can do after first inserting $I_1$ is to stop processing it after time $1$ (regardless of whether it completes), and process $I_2$, earning a total expected reward of $11$.

We have shown that indeed, for uncorrelated SK, there is a gap between policies that can preempt versus policies that can only cancel.  It appears that this gap is bounded by a constant, contrary to the gap between policies that can cancel versus policies that cannot cancel (see \cite[appx.~A.1]{GKMR11}).

\section{Appendix B: Proofs from Section~\ref{preliminaries}.}\label{appxB}

\subsection{Proof of Lemma~\ref{LPproj}.} Suppose we are given $\{z^a_{\pi,i,t}\},\{y_{\pi,t}\}$ satisfying (\ref{E2a})-(\ref{E3}), (\ref{E4a})-(\ref{E5}) which imply (\ref{E1}).  For all $i\in[n]$, $u\in\cS_i$, $t\in[B]$, let $s_{u,t}=\sum_{\pi\in\bcS:\pi_i=u}y_{\pi,t}$, and let $x^a_{u,t}=\sum_{\pi\in\bcS:\pi_i=u}z^a_{\pi,i,t}$ for each $a\in A$.  We aim to show $\{x^a_{u,t}\},\{s_{u,t}\}$ satisfies (\ref{P2a})-(\ref{P3}), (\ref{P1}), (\ref{P4a})-(\ref{P5}) and makes (\ref{P0}) the same objective function as (\ref{E0}).  For convenience, we adopt the notation that $x_{u,t}=\sum_{a\in A}x^a_{u,t}$ and $z_{\pi,i,t}=\sum_{a\in A}z^a_{\pi,i,t}$.

(\ref{P1}): $\sum_{u\in\cS}x_{u,t}=\sum_{i=1}^n\sum_{u\in\cS_i}\sum_{\pi:\pi_i=u}z_{\pi,i,t}=\sum_{\pi\in\bcS}\sum_{i=1}^n\sum_{u\in\cS_i:u=\pi_i}z_{\pi,i,t}$.  But there is a unique $u\in\cS_i$ such that $u=\pi_i$, so the sum equals $\sum_{\pi\in\bcS}\sum_{i=1}^n z_{\pi,i,t}$, which is at most $1$ by (\ref{E1}).

(\ref{P2a}): For $u\in\cS_i$, $x_{u,t}=\sum_{\pi:\pi_i=u}z_{\pi,i,t}$, and each term in the sum is at most $y_{\pi,t}$ by (\ref{E2a}) and (\ref{E3}), hence $x_{u,t}\le\sum_{\pi:\pi_i=u}y_{\pi,t}=s_{u,t}$.

(\ref{P2b}): For $u\in\cB$, $x^{\alpha}_{u,t}=\sum_{\pi:\pi_i=u}z^{\alpha}_{\pi,i,t}$, and each term in the sum is equal to $y_{\pi,t}$ by (\ref{E2b}), hence $x^{\alpha}_{u,t}=\sum_{\pi:\pi_i=u}y_{\pi,t}=s_{u,t}$.

(\ref{P3}), (\ref{P4a}), and (\ref{P4b}) are immediate from (\ref{E3}), (\ref{E4a}), and (\ref{E4b}), respectively.  For (\ref{P5}), fix $t>1$, $i\in[n]$, and $u\in\cS_i$.  Sum (\ref{E5}) over $\{\pi:\pi_i=u\}$ to get
\begin{eqnarray*}
	\sum_{\pi:\pi_i=u}y_{\pi,t} & = & \sum_{\pi:\pi_i=u}y_{\pi,t-1}-\sum_{\pi:\pi_i=u}\sum_{j=1}^n z_{\pi,j,t-1}+\sum_{\pi:\pi_i=u}\sum_{j=1}^n\sum_{(v,a)\in\Par(\pi_j)}z^a_{\pi^v,j,t-1}\cdot p^a_{v,\pi_j} \\
	s_{u,t} & = & s_{u,t-1}-\sum_{\pi:\pi_i=u}z_{\pi,i,t-1}-\sum_{\pi:\pi_i=u}\sum_{j\neq i}z_{\pi,j,t-1} \\
	& & +\sum_{\pi:\pi_i=u}\sum_{(v,a)\in\Par(u)}z^a_{\pi^v,i,t-1}\cdot p^a_{v,u}+\sum_{\pi:\pi_i=u}\sum_{j\neq i}\sum_{(v,a)\in\Par(\pi_j)}z^a_{\pi^v,j,t-1}\cdot p^a_{v,\pi_j} \\
	s_{u,t} & = & s_{u,t-1}-x_{u,t-1}-\sum_{\pi:\pi_i=u}\sum_{j\neq i}z_{\pi,j,t-1} \\
	& & +\sum_{(v,a)\in\Par(u)}(\sum_{\pi:\pi_i=u}z^a_{\pi^v,i,t-1})\cdot p^a_{v,u}+\sum_{j\neq i}\sum_{v\in\cS_j}\sum_{a\in A}\sum_{\{\pi:\pi_i=u,\Par(\pi_j)\ni(v,a)\}}z^a_{\pi^v,j,t-1}\cdot p^a_{v,\pi_j} \\
	s_{u,t} & = & s_{u,t-1}-x_{u,t-1}-\sum_{\pi:\pi_i=u}\sum_{j\neq i}z_{\pi,j,t-1} \\
	& & +\sum_{(v,a)\in\Par(u)}(\sum_{\pi:\pi_i=v}z^a_{\pi,i,t-1})\cdot p^a_{v,u}+\sum_{j\neq i}\sum_{v\in\cS_j}\sum_{a\in A}\sum_{\pi:\pi_i=u,\pi_j=v}z^a_{\pi,j,t-1}\cdot(\sum_{w:p^a_{v,w}>0}p^a_{v,w}) \\
	s_{u,t} & = & s_{u,t-1}-x_{u,t-1}-\sum_{\pi:\pi_i=u}\sum_{j\neq i}z_{\pi,j,t-1} \\
	& & +\sum_{(v,a)\in\Par(u)}x^a_{v,t-1}\cdot p^a_{v,u}+\sum_{j\neq i}\sum_{v\in\cS_j}\sum_{a\in A}\sum_{\pi:\pi_i=u,\pi_j=v}z^a_{\pi,j,t-1}\cdot(1) \\
	s_{u,t} & = & s_{u,t-1}-x_{u,t-1}-\sum_{j\neq i}\sum_{\pi:\pi_i=u}z_{\pi,j,t-1}+\sum_{(v,a)\in\Par(u)}x^a_{v,t-1}\cdot p^a_{v,u}+\sum_{j\neq i}\sum_{\pi:\pi_i=u}z_{\pi,j,t-1} \\
	s_{u,t} & = & s_{u,t-1}-x_{u,t-1}+\sum_{(v,a)\in\Par(u)}x^a_{v,t-1}\cdot p^a_{v,u} \\
\end{eqnarray*}
which is exactly (\ref{P5}).

(\ref{P0}): $\sum_{u\in\cS}\sum_{a\in A}r^a_u\sum_{t=1}^B x^a_{u,t}=\sum_{i=1}^n\sum_{u\in\cS_i}\sum_{a\in A}r^a_u\sum_{t=1}^B\sum_{\pi:\pi_i=u}z^a_{\pi,i,t}$, and by the same manipulation we made for (\ref{P1}), this is equal to $\sum_{\pi\in\bcS}\sum_{i=1}^n\sum_{a\in A}r^a_{\pi_i}\sum_{t=1}^B z^a_{\pi,i,t}$.  Thus (\ref{P0}) is the same as (\ref{E0}), completing the proof of Lemma~\ref{LPproj}.

\subsection{Proof of Lemma~\ref{LPprimeproj}.} Suppose we are given $\{z^a_{\pi,i,t}\},\{y_{\pi,t}\}$ satisfying (\ref{E2a'})-(\ref{E3'}), (\ref{E4a'})-(\ref{E5c'}) which imply (\ref{E1'}).  For all $i\in[n]$, $u\in\cS_i$, $t\in[B]$, let $s_{u,t}=\sum_{\pi\in\bcS':\pi_i=u}y_{\pi,t}$, and let $x^a_{u,t}=\sum_{\pi\in\bcS':\pi_i=u}z^a_{\pi,i,t}$ for each $a\in A$.  We aim to show $\{x^a_{u,t}\},\{s_{u,t}\}$ satisfies (\ref{P2a})-(\ref{P3}), (\ref{P1}), (\ref{P4a'})-(\ref{P5b'}) and makes (\ref{P0}) the same objective function as (\ref{E0'}).  For convenience, we adopt the notation that $x_{u,t}=\sum_{a\in A}x^a_{u,t}$ and $z_{\pi,i,t}=\sum_{a\in A}z^a_{\pi,i,t}$.

(\ref{P1}): $\sum_{u\in\cS}x_{u,t}=\sum_{i=1}^n\sum_{u\in\cS_i}\sum_{\pi:\pi_i=u}z_{\pi,i,t}=\sum_{\pi\in\bcS'}\sum_{i=1}^n\sum_{u\in\cS_i:u=\pi_i}z_{\pi,i,t}$.  The difference from the previous derivation of (\ref{P1}) is that there is only a unique $u\in\cS_i$ such that $u=\pi_i$, if $\pi_i\neq\phi_i$.  So the sum equals $\sum_{\pi\in\bcS'}\sum_{i\in I(\pi)}z_{\pi,i,t}$, which is at most $1$ by (\ref{E1'}).

Using this same manipulation, the equivalence of (\ref{P0}) and (\ref{E0'}) follows the same derivation as before.  (\ref{P2a}) and (\ref{P2b}) also follow the same derivations as before; (\ref{P3}), (\ref{P4a'}), and (\ref{P4b'}) are immediate.  It remains to prove (\ref{P5a'}) and (\ref{P5b'}).

(\ref{P5b'}): Fix $t>1$, $i\in[n]$, and $u\in\cS_i\setminus\{\rho_i\}$.  First consider the case where $\depth(u)>1$.  All $\pi\in\bcS'$ such that $\pi_i=u$ fall under (\ref{E5c'}), so we can sum over these $\pi$ to get
\begin{eqnarray*}
	\sum_{\pi:\pi_i=u}y_{\pi,t} & = & \sum_{\pi:\pi_i=u}\sum_{(v,a)\in\Par(u)}z^a_{\pi^v,i,t-1}\cdot p^a_{v,u} \\
	s_{u,t} & = & \sum_{(v,a)\in\Par(u)}(\sum_{\pi:\pi_i=u}z^a_{\pi^v,i,t-1})\cdot p^a_{v,u} \\
\end{eqnarray*}
Since $\depth(u)>1$, $v\neq\rho_i$, so $\{\pi^v:\pi\in\bcS',\pi_i=u\}=\{\pi:\pi\in\bcS',\pi_i=v\}$.  Hence the RHS of the above equals $\sum_{(v,a)\in\Par(u)}x^a_{v,t-1}\cdot p^a_{v,u}$ which is exactly (\ref{P5b'}).

For the other case where $\depth(u)=1$, all $\pi\in\bcS'$ such that $\pi_i=u$ fall under (\ref{E5b'}), so we can sum over these $\pi$ to get
\begin{eqnarray*}
	\sum_{\pi:\pi_i=u}y_{\pi,t} & = & \sum_{\pi:\pi_i=u}\sum_{a:(\rho_i,a)\in\Par(u)}(\sum_{\pi'\in\bcP(\pi^{\rho_i})}z^a_{\pi',i,t-1})\cdot p^a_{\rho_i,u} \\
	s_{u,t} & = & \sum_{a:(\rho_i,a)\in\Par(u)}(\sum_{\pi:\pi_i=u}\sum_{\pi'\in\bcP(\pi^{\rho_i})}z^a_{\pi',i,t-1})\cdot p^a_{\rho_i,u} \\
	s_{u,t} & = & \sum_{a:(\rho_i,a)\in\Par(u)}(\sum_{\pi:\pi_i=\rho_i}z^a_{\pi,i,t-1})\cdot p^a_{\rho_i,u} \\
	s_{u,t} & = & \sum_{a:(\rho_i,a)\in\Par(u)}x^a_{\rho_i,t-1}\cdot p^a_{\rho_i,u} \\
\end{eqnarray*}
We explain the third equality.  Since $u\neq\rho_i$ implies arm $i$ is the active arm in all of $\{\pi\in\bcS':\pi_i=u\}$, this set is equal to $\{\rho_1,\phi_1\}\times\cdots\times u\times\cdots\times\{\rho_n,\phi_n\}$.  Thus $\{\pi'\in\bcP(\pi^{\rho_i}):\pi\in\bcS',\pi_i=u\}=\{\pi'\in\bcP(\pi):\pi\in\{\rho_1,\phi_1\}\times\cdots\times\rho_i\times\cdots\times\{\rho_n,\phi_n\}\}$.  Recall that $\bcP(\pi)$ is the set of joint nodes that would transition to $\pi$ with no play.  Therefore, this set is equal to $\{\pi'\in\bcS':\pi'_i=\rho_i\}$, as desired.

(\ref{P5a'}): Fix $t>1$ and $i\in[n]$.  Unfortunately, $\pi\in\bcS'$ such that $\pi_i=\rho_i$ can fall under (\ref{E5a'}), (\ref{E5b'}), or (\ref{E5c'}).  First let's sum over the $\pi$ falling under (\ref{E5a'}):
\begin{eqnarray*}
	\sum_{\pi\notin\bcA:\pi_i=\rho_i}y_{\pi,t} & = & \sum_{\pi\notin\bcA:\pi_i=\rho_i}\sum_{\pi'\in\bcP(\pi)}\Big(y_{\pi',t-1}-\sum_{j\in I(\pi')}z_{\pi',j,t-1}\Big) \\
	& = & \sum_{\pi\in\bcS':\pi_i=\rho_i}\Big(y_{\pi,t-1}-z_{\pi,i,t-1}-\sum_{j\in I(\pi)\setminus\{i\}}z_{\pi,j,t-1}\Big) \\
	& = & s_{\rho_i,t-1}-x_{\rho_i,t-1}-\sum_{\pi\in\bcS':\pi_i=\rho_i}\sum_{j\in I(\pi)\setminus\{i\}}z_{\pi,j,t-1} \\
\end{eqnarray*}
where the second equality requires the same set bijection explained above.  Furthermore,
\begin{eqnarray*}
	\sum_{\pi\in\bcS':\pi_i=\rho_i}\sum_{j\in I(\pi)\setminus\{i\}}z_{\pi,j,t-1} & = & \sum_{k\neq i}\sum_{\pi\in\bcA_k:\pi_i=\rho_i}\Big(z_{\pi,k,t-1}+\sum_{j\in I(\pi)\setminus\{i,k\}}z_{\pi,j,t-1}\Big)+\sum_{\pi\notin\bcA:\pi_i=\rho_i}\sum_{j\in I(\pi)\setminus\{i\}}z_{\pi,j,t-1} \\
	& = & \sum_{k\neq i}\sum_{\pi\in\bcA_k:\pi_i=\rho_i}\Big(z_{\pi,k,t-1}+\sum_{j:\pi_j=\rho_j,j\neq i}z_{\pi,j,t-1}\Big)+\sum_{\pi\notin\bcA:\pi_i=\rho_i}\sum_{j:\pi_j=\rho_j,j\neq i}z_{\pi,j,t-1} \\
	& = & \sum_{k\neq i}\sum_{\pi\in\bcA_k:\pi_i=\rho_i}z_{\pi,k,t-1}+\sum_{\pi\in\bcS':\pi_i=\rho_i}\sum_{j:\pi_j=\rho_j,j\neq i}z_{\pi,j,t-1} \\
\end{eqnarray*}
Now let's sum over the $\pi$ falling under (\ref{E5c'}):
\begin{eqnarray*}
	\sum_{j\neq i}\sum_{\{\pi:\pi_i=\rho_i,\depth(\pi_j)>1\}}y_{\pi,t} & = & \sum_{j\neq i}\sum_{\{\pi:\pi_i=\rho_i,\depth(\pi_j)>1\}}\sum_{(v,a)\in\Par(\pi_j)}z^a_{\pi^v,j,t-1}\cdot p^a_{v,\pi_j} \\
	& = & \sum_{j\neq i}\sum_{v\in\cS_j\setminus\{\rho_j\}}\sum_{a\in A}\sum_{\{\pi:\pi_i=\rho_i,\Par(\pi_j)\ni(v,a)\}}z^a_{\pi^v,j,t-1}\cdot p^a_{v,\pi_j} \\
	& = & \sum_{j\neq i}\sum_{v\in\cS_j\setminus\{\rho_j\}}\sum_{a\in A}\sum_{\pi:\pi_i=\rho_i,\pi_j=v}z^a_{\pi,j,t-1}\cdot(\sum_{w:p^a_{v,w}>0}p^a_{v,w}) \\
	& = & \sum_{j\neq i}\sum_{v\in\cS_j\setminus\{\rho_j\}}\sum_{a\in A}\sum_{\pi:\pi_i=\rho_i,\pi_j=v}z^a_{\pi,j,t-1}\cdot(1) \\
	& = & \sum_{j\neq i}\sum_{\pi\in\bcA_j:\pi_i=\rho_i}z_{\pi,j,t-1} \\
\end{eqnarray*}
where the third equality uses the fact that $v\neq\rho_j$ to convert $\pi^v$ to $\pi$.  Finally, let's sum over the $\pi$ falling under (\ref{E5b'}):
\begin{eqnarray*}
	& & \sum_{j\neq i}\sum_{\{\pi:\pi_i=\rho_i,\depth(\pi_j)=1\}}y_{\pi,t} \\
	& = & \sum_{j\neq i}\sum_{\{\pi:\pi_i=\rho_i,\depth(\pi_j)=1\}}\sum_{a:(\rho_j,a)\in\Par(\pi_j)}(\sum_{\pi'\in\bcP(\pi^{\rho_j})}z^a_{\pi',j,t-1})\cdot p^a_{\rho_j,\pi_j} \\
	& = & \sum_{j\neq i}\sum_{a\in A}\sum_{\{\pi:\pi_i=\rho_i,\Par(\pi_j)\ni(\rho_j,a)\}}(\sum_{\pi'\in\bcP(\pi^{\rho_j})}z^a_{\pi',j,t-1})\cdot p^a_{\rho_j,\pi_j} \\
	& = & \sum_{j\neq i}\sum_{a\in A}\sum_{\pi:\pi_i=\rho_i,\pi_j=\rho_j}z^a_{\pi,j,t-1}\cdot(\sum_{w:p^a_{\rho_j,w}>0}p^a_{\rho_j,w}) \\
	& = & \sum_{j\neq i}\sum_{a\in A}\sum_{\pi:\pi_i=\rho_i,\pi_j=\rho_j}z^a_{\pi,j,t-1}\cdot(1) \\
	& = & \sum_{\pi:\pi_i=\rho_i}\sum_{j:\pi_j=\rho_j,j\neq i}z_{\pi,j,t-1} \\
\end{eqnarray*}
where the third equality requires the same set bijection again.  Combining the last four blocks of equations, we get $s_{\rho_i,t}=s_{\rho_i,t-1}-x_{\rho_i,t-1}$ which is exactly (\ref{P5a'}), completing the proof of Lemma~\ref{LPprimeproj}.

\section{Appendix C: Proofs from Section~\ref{proof_section2}.}\label{appxC}

\subsection{Proof of Lemma~\ref{flowdec}.}

Finding the $q$'s is a separate problem for each arm, so we can fix $i\in[n]$.  Furthermore, we can fix $u\in\cS_i\setminus\{\rho_i\}$; we will specify an algorithm that defines $\{q_{v,b,t',u,a,t}:a\in A,t\in[B],(v,b)\in\Par(u),t'<t\}$ satisfying (\ref{flowdec_eqn1}) and (\ref{flowdec_eqn2}).

Observe that by substituting (\ref{P2a}) into (\ref{P5}), we get $s_{u,t'}\le\sum_{(v,b)\in\Par(u)}x^b_{v,t'-1}\cdot p^b_{v,u}$ for all $t'>1$.  Summing over $t'=2,\ldots,t$ for an arbitrary $t\in[B]$, and using (\ref{P2a}) again on the LHS, we get $\sum_{t'=2}^t\sum_{a\in A}x^a_{u,t'}\le\sum_{t'=1}^{t-1}\sum_{(v,b)\in\Par(u)}x^b_{v,t'}\cdot p^b_{v,u}$.

Now, for all $t'=2,\ldots,B$ and $a\in A$, initialize $\tx^a_{t'}:=x^a_{u,t'}$.  For all $t'=1,\ldots,B-1$ and $(v,b)\in\Par(u)$, initialize $\tx^b_{v,t'}:=x^b_{v,t'}\cdot p^b_{v,u}$.  We are omitting the subscript $u$ because $u$ is fixed.  The following $B-1$ inequalities hold:
\begin{align}
\sum_{t'=2}^{t''}\sum_{a\in A}\tx^a_{t'} & \le\sum_{t'=1}^{t''-1}\sum_{(v,b)\in\Par(u)}\tx^b_{v,t'} && t''=2,\ldots,B \label{invariant}
\end{align}
The algorithm updates the variables $\tx^a_{t'}$ and $\tx^b_{v,t'}$ over iterations $t=2,\ldots,B$, but we will inductively show that inequality $t''$ of (\ref{invariant}) holds until the end of iteration $t''$.  The algorithm can be described as follows:

\

\noindent $\mathsf{Decomposition\ Algorithm}$
\begin{itemize}
	\item Initialize all $q_{v,b,t',u,a,t}:=0$.
	\item For $t=2,\ldots,B$:
	\begin{itemize}
		\item While there exists some $a\in A$ such that $\tx^a_t>0$:
		\begin{enumerate}
			\item Choose any non-zero $\tx^b_{v,t'}$, where $(v,b)\in\Par(u)$, with $t'<t$.
			\item Let $Q=\min\{\tx^a_t,\tx^b_{v,t'}\}$.
			\item Set $q_{v,b,t',u,a,t}:=\frac{Q}{x^b_{v,t'}\cdot p^b_{v,u}}$.
			\item Subtract Q from both $\tx^a_t$ and $\tx^b_{v,t'}$.
		\end{enumerate}
	\end{itemize}
\end{itemize}

\

Let's consider iteration $t$ of the algorithm.  The inequality of (\ref{invariant}) with $t''=t$ guarantees that there always exists such a non-zero $\tx^b_{v,t'}$ in Step~1.  In Step~4, Q is subtracted from both the LHS and RHS of all inequalities of (\ref{invariant}) with $t''\ge t$, so these inequalities continue to hold. (Q is also subtracted from the RHS of inequalities of (\ref{invariant}) with $t'<t''<t$, so these inequalities might cease to hold.) This inductively establishes that all inequalities of (\ref{invariant}) with $t''\ge t$ hold during iteration $t$, and thus Step~1 of the algorithm is well-defined.

Now we show that (\ref{flowdec_eqn2}) is satisfied.  Suppose on iteration $t$ of the algorithm, we have some $\tx^a_t>0$ and $\tx^b_{v,t'}>0$ on Step~1.  Note that $q_{v,b,t',u,a,t}$ must currently be $0$, since if it was already set, then either $\tx^a_t$ or $\tx^b_{v,t'}$ would have been reduced to $0$.  Therefore, in Step~3 we are incrementing the LHS of (\ref{flowdec_eqn2}) by $x^b_{v,t'}\cdot p^b_{v,u}\cdot\frac{Q}{x^b_{v,t'}\cdot p^b_{v,u}}=Q$, after which we are subtracting Q from $\tx^a_t$ in Step~4.  Since over iterations $t=2,\ldots,B$, for every $a\in A$, $\tx^a_t$ gets reduced from $x^a_{u,t}$ to $0$, it must be the case that every equation in (\ref{flowdec_eqn2}) holds by the end of the algorithm.

For (\ref{flowdec_eqn1}), we use a similar argument.  Fix some $(v,b)\in\Par(u)$ and $t'\in[B-1]$.  Whenever we add $\frac{Q}{x^b_{v,t'}\cdot p^b_{v,u}}$ to the LHS of (\ref{flowdec_eqn1}), we are reducing $\tx^b_{v,t'}$ by $Q$.  Since $\tx^b_{v,t'}$ starts at $x^b_{v,t'}\cdot p^b_{v,u}$ and cannot be reduced below $0$, the biggest we can make the LHS of (\ref{flowdec_eqn1}) is $\frac{x^b_{v,t'}\cdot p^b_{v,u}}{x^b_{v,t'}\cdot p^b_{v,u}}=1$.

Finally, it is clear that the algorithm takes polynomial time, since every time we loop through Steps~1 to 4 either $\tx^a_t$ or $\tx^b_{v,t'}$ goes from non-zero to zero, and there were only a polynomial number of such variables to begin with.  Other than the statement for $u\in\cB$, this completes the proof of Lemma~\ref{flowdec}.

Now, if $u\in\cB$, then we can strengthen (\ref{invariant}).  Indeed, substituting (\ref{P2b}) (instead of (\ref{P2a})) into (\ref{P5}), we get that all inequalities of (\ref{invariant}) hold as equality.  At the start of iteration $t=2$, it is the case that $\tx^{\alpha}_t=\sum_{(v,b)\in\Par(u)}\tx^b_{v,t-1}$, and by the end of the iteration, it will be the case that $\tx^{\alpha}_t=0$, and $\tx^b_{v,t-1}=0$, $q_{v,b,t-1,u,\alpha,t}=1$ for all $(v,b)\in\Par(u)$.  As a result, the equalities of (\ref{invariant}) with $t''>t$ will continue to hold as equality.  We can inductively apply this argument to establish that $q_{v,b,t',u,\alpha,t'+1}=1$ for all $(v,b)\in\Par(u)$ and $t'\in[B-1]$, as desired.

\subsection{Proof of Lemma~\ref{grind}.}

We make use of the following conjecture of Samuels, which is proven for $n\le 4$ (see \cite{Sam66,Sam68}):
\begin{conjecture}
	\label{samconj}
	Let $X_1,\ldots,X_n$ be independent non-negative random variables with respective expectations $\mu_1\ge\ldots\ge\mu_n$, and let $\lambda>\sum_{i=1}^n\mu_i$.  Then $\Pr[\sum_{i=1}^n X_i\ge\lambda]$ is maximized when the $X_i$'s are distributed as follows, for some index $k\in[n]$:
	\begin{itemize}
		\item For $i>k$, $X_i=\mu_i$ with probability $1$.
		\item For $i\le k$, $X_i=\lambda-\sum_{\ell=k+1}^n\mu_{\ell}$ with probability $\frac{\mu_i}{\lambda-\sum_{\ell=k+1}^n\mu_{\ell}}$, and $X_i=0$ otherwise.
	\end{itemize}
\end{conjecture}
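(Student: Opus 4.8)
The plan is to reduce the infinite‑dimensional extremal problem — maximize $\Pr[\sum_{i=1}^n X_i\ge\lambda]$ over all independent tuples with $X_i\ge 0$ and $\mathbb{E}X_i=\mu_i$ — to a finite‑dimensional optimization, and then read off the conjectured maximizer. First I would normalize the feasible region. Replacing $X_i$ by $\min\{X_i,\lambda\}$ preserves the event $\{\sum X_i\ge\lambda\}$ and only decreases $\mathbb{E}X_i$; since $\mu_i\le\sum_j\mu_j<\lambda$, one can then move mass upward toward $\lambda$ to restore the mean exactly, and moving mass up in one coordinate only raises $\Pr[\sum X_i\ge\lambda]$ (the conditional survival function of the other coordinates is nondecreasing). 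So without loss of generality every $X_i$ is supported on $[0,\lambda]$, the set of feasible tuples of marginals is weak‑$*$ compact, and $\Pr[\sum X_i\ge\lambda]$ is upper semicontinuous there (portmanteau, since $\{\sum\ge\lambda\}$ is closed); hence an extremal tuple $(X_1^\star,\dots,X_n^\star)$ exists.

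The key general reduction is that every $X_i^\star$ may be taken to be two‑valued. Freeze all marginals but the $i$‑th and put $T_i=\sum_{j\neq i}X_j$; then $\Pr[\sum_j X_j\ge\lambda]=\mathbb{E}[h_i(X_i)]$ with $h_i(x)=\Pr[T_i\ge\lambda-x]$ bounded, nondecreasing and right‑continuous. Over probability measures on $[0,\lambda]$ the pair $(\mathbb{E}X_i,\mathbb{E}h_i(X_i))$ traces a two‑dimensional convex body; maximizing the linear functional $\mathbb{E}h_i(X_i)$ on the slice $\mathbb{E}X_i=\mu_i$ is therefore attained at an extreme point, i.e.\ by a measure with at most two atoms (Bauer's maximum principle / the classical two‑moment argument). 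Applying this coordinate by coordinate, the extremal tuple has all marginals two‑valued.

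Next I would pin the two‑atom marginals down to the conjectured shape, which needs two further claims: (i) for each $i$ the lower atom of $X_i^\star$ can be taken to be $0$; and (ii) after (i), the variables that are not a.s.\ constant — which, by the ordering $\mu_1\ge\cdots\ge\mu_n$, form a top block $i\le k$ — all share a common positive value $c$, the constant ones satisfy $X_i^\star=\mu_i$, and $c=\lambda-\sum_{\ell>k}\mu_\ell$. Claim (i) I would get from a mass‑shifting exchange: given a marginal on $\{a,b\}$ with $0<a<b$, moving the low atom to $0$ and the high atom up to a suitable $b'\ge b$ keeping the mean fixed cannot decrease $\mathbb{E}h_i(\cdot)$ since $h_i$ is nondecreasing; the care needed is that such a move for coordinate $i$ alters the functions $h_j$ for $j\neq i$, so I would perform the moves in an order that makes each one last for its own coordinate. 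Claim (ii) is then a finite optimization over the remaining atom locations and probabilities, handled by a local exchange / derivative argument showing the positive atoms must coincide and equal $\lambda$ minus the total of the deterministic block. Once the optimizer lies in the one‑parameter family indexed by $k$, the maximum equals $\max_k\big(1-\prod_{i\le k}(1-\tfrac{\mu_i}{\lambda-\sum_{\ell>k}\mu_\ell})\big)$, which is exactly the value of the conjectured configuration.

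The main obstacle is steps (i)–(ii): the reduction to two‑atom marginals is clean and fully general, but the subsequent exchange arguments are genuinely delicate because modifying one coordinate changes every $h_j$, so the coordinates cannot be optimized independently, and controlling this coupling is precisely why a complete proof is known only for $n\le 4$ (Samuels \cite{Sam66,Sam68}; see also Feige \cite{Fei06}). For the use made of the conjecture in this paper only $n=3$ is needed, where after the reduction to two‑atom marginals the residual problem is an explicit optimization in finitely many real variables that can be finished by a bounded case analysis over which of the $2^3$ sign patterns of $X_1+X_2+X_3-\lambda$ are positive.
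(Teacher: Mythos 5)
This statement is labeled a \emph{conjecture} in the paper and is not proved there: the paper cites Samuels' proof for $n\le 4$ \cite{Sam66,Sam68} and, in the proof of Lemma~\ref{grind}, invokes only the proven case $n=3$. So there is no in-paper proof against which to check your argument; the relevant question is whether your outline actually establishes the general statement. It does not, and you largely say so yourself.

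The opening reductions are sound and standard: truncate to $[0,\lambda]$, use weak-$*$ compactness and upper semicontinuity to get an extremal tuple, then fix all coordinates but one, view the objective as $\mathbb{E}[h_i(X_i)]$ with $h_i$ bounded nondecreasing, and invoke Bauer's maximum principle on the slice $\mathbb{E}X_i=\mu_i$ to conclude each extremal marginal is two-valued. (One caveat: the "move mass upward to restore the mean" step is not automatic — you also need that after truncation the lost mean can be restored within $[0,\lambda]$ without decreasing the tail functional, which requires a small argument — but this is fixable.) The genuine gap is exactly where you flag it: steps (i) and (ii), which must force the low atom to $0$, force all nondeterministic high atoms to a common location $\lambda-\sum_{\ell>k}\mu_\ell$, and make the ordering $\mu_1\ge\cdots\ge\mu_n$ determine which block is deterministic. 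These are coupled exchange arguments (any change to $X_i$ changes every $h_j$, $j\ne i$), and "perform the moves in an order that makes each one last" does not resolve the coupling, since the last move must still preserve the earlier reductions. This is precisely why the conjecture is open for $n\ge 5$; Samuels' proofs for $n\le 4$ are substantially more involved than this sketch. Your closing remark that $n=3$ can be finished by "bounded case analysis over $2^3$ sign patterns of $X_1+X_2+X_3-\lambda$" also misdescribes the residual problem: after the two-point reduction the free variables are atom locations and probabilities, not sign patterns, and the case analysis is over which variables are degenerate and where the positive atoms land. Bottom line: a correct and honest reduction, but not a proof — which the paper never claims to give either; the right move is simply to cite Samuels for $n\le 4$ (or Feige \cite{Fei06} for the conjecture's status).
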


If we have $\mathbb{E}[Y_i]+\mathbb{E}[Y_j]\le\frac{t}{6}$ for $i\neq j$, then we can treat $Y_i+Y_j$ as a single random variable satisfying $\mathbb{E}[Y_i+Y_j]\le\frac{t}{6}$.  By the pigeonhole principle, we can repeat this process until $n\le 3$, since $\sum_{j=1}^m\mathbb{E}[Y_j]\le\frac{t}{3}$.  In fact, we assume $n$ is exactly $3$ (we can add random variables that take constant value $0$ if necessary), so that we can apply Conjecture~\ref{samconj} for $n=3$, which has been proven to be true.  We get that $\Pr[Y_1+Y_2+Y_3\ge\frac{t}{2}]$ cannot exceed the maximum of the following (corresponding to the cases $k=3,2,1$, respectively):
\begin{itemize}
	\item $\displaystyle 1-\big(1-\frac{\mu_1}{\frac{t}{2}}\big)\big(1-\frac{\mu_2}{\frac{t}{2}}\big)\big(1-\frac{\mu_3}{\frac{t}{2}}\big)$
	\item $\displaystyle 1-\big(1-\frac{\mu_1}{\frac{t}{2}-\mu_3}\big)\big(1-\frac{\mu_2}{\frac{t}{2}-\mu_3}\big)$
	\item $\displaystyle 1-\big(1-\frac{\mu_1}{\frac{t}{2}-\mu_2-\mu_3}\big)$
\end{itemize}

Now we employ Lemma~4 from \cite{BGL+12} to bound these quantities:
\begin{lemma}
	Let $r$ and $p_{\max}$ be positive real values.  Consider the problem of maximizing $1-\prod_{i=1}^t(1-p_i)$ subject to the constraints $\sum_{i=1}^t p_i\le r$, and $0\le p_i\le p_{\max}$ for all $i$.  Denote the maximum value by $\beta(r, p_{\max})$.  Then
	\begin{eqnarray*}
		\beta(r,p_{\max}) & = & 1-(1-p_{\max})^{\lfloor\frac{r}{p_{\max}}\rfloor}(1-(r-\lfloor\frac{r}{p_{\max}}\rfloor\cdot p_{\max})) \\
		& \le & 1-(1-p_{\max})^{\frac{r}{p_{\max}}} \\
	\end{eqnarray*}
\end{lemma}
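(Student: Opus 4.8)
The plan is to reduce the maximization of $1-\prod_{i=1}^t(1-p_i)$ to the \emph{minimization} of the product $\prod_{i=1}^t(1-p_i)$ over the polytope $P=\{p\in\mathbb{R}^t: 0\le p_i\le p_{\max}\ \forall i,\ \sum_i p_i\le r\}$ (here $p_{\max}\le 1$, as in our application), and then to exploit convexity. Taking logarithms, minimizing $\prod_i(1-p_i)$ over $P$ is equivalent to maximizing $\sum_{i=1}^t g(p_i)$, where $g(p)=-\ln(1-p)$ has $g''(p)=(1-p)^{-2}>0$ and is therefore convex. Hence $\sum_i g(p_i)$ is a convex function on the bounded polytope $P$, so it attains its maximum at an extreme point of $P$.

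Next I would characterize the extreme points of $P$. If an extreme point $p$ had two coordinates $p_i,p_j$ both lying strictly in $(0,p_{\max})$, then the perturbations $(p_i,p_j)\mapsto(p_i\pm\varepsilon,\ p_j\mp\varepsilon)$ keep $p$ in $P$ for small $\varepsilon>0$, exhibiting $p$ as a midpoint of two distinct feasible points, a contradiction. Likewise, if $p$ had a coordinate in $(0,p_{\max})$ while $\sum_i p_i<r$, perturbing that single coordinate gives the same contradiction. So every extreme point has all coordinates in $\{0,p_{\max}\}$ except possibly one, and if a fractional coordinate is present then $\sum_i p_i=r$. Writing $m=\lfloor r/p_{\max}\rfloor$, the extreme points are therefore, up to relabeling: (i) exactly $k$ coordinates equal to $p_{\max}$ and the rest $0$, for some integer $0\le k\le\min\{t,m\}$, with product $(1-p_{\max})^k$; or (ii) $m$ coordinates equal to $p_{\max}$, one coordinate equal to $r-m\cdot p_{\max}\in[0,p_{\max})$, and the rest $0$, with product $(1-p_{\max})^m\bigl(1-(r-m\cdot p_{\max})\bigr)$; case (ii) requires $t\ge m+1$.

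Comparing the candidate minima: among type (i) the product is smallest at $k=\min\{t,m\}$, and since $1-(r-m\cdot p_{\max})\le 1$, the type (ii) value never exceeds the type (i) value at $k=m$. Thus, assuming $t\ge m+1$ (otherwise the displayed optimum is simply not attainable and $\beta(r,p_{\max})$ is only smaller, which still suffices for the stated inequality), the minimum of the product is $(1-p_{\max})^m\bigl(1-(r-m\cdot p_{\max})\bigr)$, which is exactly the first displayed equality for $\beta(r,p_{\max})$. I expect this extreme-point characterization to be the only nontrivial step; everything around it is bookkeeping. A self-contained alternative that avoids quoting ``the maximum of a convex function over a polytope is at a vertex'' is a direct smoothing argument: for fixed $p_i+p_j$, the product $(1-p_i)(1-p_j)$ is a concave quadratic in $p_i$, hence decreases as $(p_i,p_j)$ is pushed to the boundary of $[0,p_{\max}]^2$, and iterating drives all but one coordinate into $\{0,p_{\max}\}$.

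For the final inequality $\beta(r,p_{\max})\le 1-(1-p_{\max})^{r/p_{\max}}$, I would write $r/p_{\max}=m+\theta$ with $\theta\in[0,1)$, reducing the claim to $(1-p_{\max})^m(1-\theta p_{\max})\ge(1-p_{\max})^{m+\theta}$, i.e.\ $1-\theta p_{\max}\ge(1-p_{\max})^\theta$. This is precisely the statement that the convex function $\theta\mapsto(1-p_{\max})^\theta$ lies below its chord on $[0,1]$ (equivalently, Bernoulli's inequality): $(1-p_{\max})^\theta\le(1-\theta)\cdot 1+\theta\cdot(1-p_{\max})=1-\theta p_{\max}$, which completes the proof.
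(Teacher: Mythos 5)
The paper does not actually prove this lemma; it quotes it verbatim as ``Lemma 5'' of Bansal et al.\ \cite{curewoes} and uses it as a black box. There is therefore no in-paper argument for you to compare against, and your proposal should be judged on its own terms.

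Your proof is correct. The reduction from maximizing $1-\prod_i(1-p_i)$ to minimizing $\prod_i(1-p_i)$, the logarithmic reformulation making the objective a sum of convex functions $-\ln(1-p_i)$, and the invocation of ``a convex function on a compact polytope attains its maximum at an extreme point'' are all sound (the smoothing alternative you sketch is equally valid and avoids the $p_{\max}<1$ hypothesis that the logarithm needs, though as you note $p_{\max}<1$ holds in the application). Your characterization of the extreme points of $\{p:\ 0\le p_i\le p_{\max},\ \sum_i p_i\le r\}$ is right: any point with two coordinates strictly interior to $(0,p_{\max})$, or with one interior coordinate and slack in the budget constraint, is a midpoint of two feasible perturbations, so extreme points have all coordinates in $\{0,p_{\max}\}$ save possibly one, with the budget tight in the exceptional case. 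Comparing the resulting candidate products and using $1-(r-mp_{\max})\le 1$ correctly identifies the minimizer, and your caveat that the displayed equality requires $t\ge\lfloor r/p_{\max}\rfloor+1$ (otherwise $\beta$ is strictly smaller, which only helps the inequality) is an honest point that the lemma statement itself glosses over. Finally, writing $r/p_{\max}=m+\theta$ with $\theta\in[0,1)$ and reducing the second displayed bound to $1-\theta p_{\max}\ge(1-p_{\max})^\theta$, which is the chord inequality for the convex map $\theta\mapsto(1-p_{\max})^\theta$ on $[0,1]$, is exactly right.
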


Recall that $\mu_1,\mu_2,\mu_3\le\frac{t}{6}$ and $\mu_1+\mu_2+\mu_3\le\frac{t}{3}$.
\begin{itemize}
	\item In the first case $k=3$, we get $p_{\max}=\frac{1}{3}$ and $r=\frac{2}{3}$, so the quantity is at most $\beta(\frac{2}{3},\frac{1}{3})\le 1-(1-\frac{1}{3})^2=\frac{5}{9}$, as desired.
	\item In the second case $k=2$, for an arbitrary $\mu_3\in[0,\frac{t}{6}]$, we get that the quantity is at most $\displaystyle\beta\big(\frac{\frac{t}{3}-\mu_3}{\frac{t}{2}-\mu_3},\frac{\frac{t}{6}}{\frac{t}{2}-\mu_3}\big)\le 1-\big(1-\frac{\frac{t}{6}}{\frac{t}{2}-\mu_3}\big)^{(\frac{t}{3}-\mu_3)/(\frac{t}{6})}$.  It can be checked that the maximum occurs at $\mu_3=0$, so the quantity is at most $\frac{5}{9}$ for any value of $\mu_3\in[0,\frac{t}{6}]$, as desired.
	\item In the third case $k=1$, we get that the quantity is at most $\displaystyle\frac{\mu_1}{\frac{t}{2}-(\frac{t}{3}-\mu_1)}=\frac{\mu_1}{\frac{t}{6}+\mu_1}$, which at most $\frac{1}{2}$ over $\mu_1\in[0,\frac{t}{6}]$, as desired.
\end{itemize}

Therefore, Conjecture~\ref{samconj} tells us that the maximum value of $\Pr[\sum_{j=1}^m Y_j\ge\frac{t}{2}]$ is $\frac{5}{9}$, completing the proof of Lemma~\ref{grind}.

\end{document}